\documentclass[a4paper,USenglish,cleveref,nameinlink]{lipics-v2021}
\nolinenumbers
\hideLIPIcs
\usepackage[dvipsnames]{xcolor}
\usepackage{tikz}
\usepackage{algorithm}
\usepackage[noend]{algpseudocode}

\algrenewcommand\algorithmicindent{2.4em}%
\usepackage{amssymb,amsmath,mathtools,bm}
\allowdisplaybreaks

\usetikzlibrary{decorations.pathreplacing}

\newcommand{\N}{\mathbb{N}}
\newcommand{\R}{\mathbb{R}}

\makeatletter
\renewcommand\@oddfoot{
	\hfil
	\rlap{%
		\vtop{%
			\vskip10mm
			\colorbox[rgb]{0.99,0.78,0.07}
			{\@tempdima\evensidemargin
				\advance\@tempdima1in
				\advance\@tempdima\hoffset
				\hb@xt@\@tempdima{%
					\textcolor{darkgray}{\normalsize\sffamily
						\bfseries\quad
						\expandafter\textsolittle\expandafter{
							arXiv.org}}%
					\strut\hss}}}}
}
\makeatother

\DeclarePairedDelimiter\abs{\lvert}{\rvert}
\DeclareMathOperator*{\argmin}{arg\;min}

\title{Priority Algorithms with Advice for Disjoint Path Allocation Problems}

\author{Hans-Joachim Böckenhauer}{Department of Computer Science, ETH Zürich, Switzerland}{hjb@inf.ethz.ch}{https://orcid.org/0000-0001-9164-3674}{}
\author{Fabian Frei}{Department of Computer Science, ETH Zürich, Switzerland}{fabian.frei@inf.ethz.ch}{https://orcid.org/0000-0002-1368-3205}{}
\author{Silvan Horvath}{Department of Mathematics, ETH Zürich, Switzerland}{horvaths@student.ethz.ch}{https://orcid.org/0000-0003-2629-5719}{}

\authorrunning{H.-J.\,Böckenhauer, F.\,Frei, S.\,Horvath}

\Copyright{Hans-Joachim Böckenhauer, Fabian Frei, Silvan Horvath}

\ccsdesc[100]{Theory of computation~Design and analysis of algorithms~Approximation algorithms analysis~Routing and network design problems}

\keywords{disjoint path allocation, priority algorithms, advice complexity, greedy algorithms}

\begin{document}

\maketitle

\begin{abstract}
We analyze the Disjoint Path Allocation problem (\textnormal{DPA}) in the priority framework. Motivated by the problem of traffic regulation in communication networks, \textnormal{DPA} consists of allocating edge-disjoint paths in a graph. While online algorithms for \textnormal{DPA} have been thoroughly studied in the past, we extend the analysis of this optimization problem by considering the more powerful class of priority algorithms. Like an online algorithm, a priority algorithm receives its input only sequentially and must output irrevocable decisions for individual input items before having seen the input in its entirety. However, in contrast to the online setting, a priority algorithm may choose an order on the set of all possible input items and the actual input is then presented according to this order. A priority algorithm is a natural model for the intuitively well-understood concept of a greedy algorithm.

Apart from analyzing the classical priority setting, we also consider priority algorithms with advice. Originally conceived to study online algorithms from an information-theoretic point of view, the concept of advice has recently been extended to the priority framework.

In this paper, we analyze the classical variant of the \textnormal{DPA} problem on the graph class of paths, the related problem of Length-Weighted \textnormal{DPA}, and finally, \textnormal{DPA} on the graph class of trees. We show asymptotically matching upper and lower bounds on the advice necessary for optimality in \textnormal{LWDPA} and generalize the known optimality result for \textnormal{DPA} on paths to trees with maximal degree at most 3. On trees with maximal degree greater than 3, we prove matching upper and lower bounds on the approximation ratio in the advice-free priority setting. Finally, we present upper and lower bounds on the advice necessary to achieve optimality on such trees.
\end{abstract}

\section{Introduction}\label{sec:introduction}

\subsection{The Model}\label{sec:introprioalg}

\subparagraph*{Priority Algorithms.}
Priority algorithms, a more powerful variant of online algorithms, were originally conceived by Borodin, Nielsen, and Rackoff \cite{borodin2003incremental} with the goal of modeling greedy algorithms. Online algorithms receive their input sequentially in the form of individual \emph{requests}, where an irrevocable decision for each request must be made before the subsequent request is revealed. While this constraint already forces the online algorithm to some degree of greediness, actual greedy algorithms often operate by first preprocessing their input to ensure a certain order among the individual requests. They then operate in an online fashion on the preprocessed data, such that requests are presented in the predetermined order. This idea of exploiting the order in which the input is presented to the algorithm is captured in the priority model.

Like a classical online algorithm, a priority algorithm is presented with its input in a sequential fashion and has to output a decision for each request before being fed the next one. However, unlike in the classical online setting, a priority algorithm imposes a \emph{priority order} (denoted by $\prec$) on the set, or \emph{universe}, of all possible requests. The subset of the universe comprising the actual input is then fed to the priority algorithm according to this order.

We further distinguish \emph{fixed-priority} algorithms from the more general class of \emph{adaptive-priority} algorithms. Both types choose a priority order before being presented with the first request. While this initial priority order persists during the entire operation of a fixed-priority algorithm, an adaptive priority algorithm may change the order after each processed request. For example, Prim's algorithm for finding a minimum spanning tree is a classical greedy algorithm that can only be modeled in the adaptive-priority framework.

Priority algorithms have been studied for a range of optimization problems, such as set cover \cite{setcover}, vertex cover, independent set, and colorability \cite{graphpriocite}, auctions \cite{auctionsprio}, scheduling \cite{borodin2003incremental}, and satisfiability \cite{maxsatprio}.

\subparagraph*{Advice.}
For many optimization problems, one notices that the quality of a solution produced by an online algorithm is much worse than that of an optimal solution. This of course stems from the fact that an online algorithm is oblivious of the remaining input when making its decision for each request, i.e., it lacks information. To analyze this information deficiency, Dobrev et al.~\cite{dobrev2008much} introduced the measure of the \emph{advice complexity} of an online problem, a concept that was refined by Emek et al.~\cite{DBLP:journals/tcs/EmekFKR11}, by Hromkovi{\v{c}} et al.~\cite{hromkovivc2010information}, and by  Böckenhauer et al.~\cite{bockenhauer2017online}, whose model we assume in this paper.

Intuitively, advice complexity is the amount of information about the input an online algorithm needs in order to produce a solution of certain quality. More concretely, an \emph{online algorithm with advice} is given an \emph{advice tape} -- an infinite binary string -- that contains information about the specific input the algorithm has to operate on. The advice tape can be thought of as being provided by an oracle that has unlimited computational power, knows the entire input as well as the inner workings of the algorithm, and cooperates with it. The advice complexity is the number of bits of the advice tape the algorithm accesses during its runtime. An overview of results in the advice framework can be found in the survey by Boyar et al. \cite{DBLP:journals/csur/BoyarFKLM17} and in the textbook by Komm \cite{komm2016introduction}.

The concept of advice has recently been extended to the priority framework by Borodin et al.~\cite{borodin2020advice} and by Boyar et al.~\cite{boyar2019advice}. There are multiple possible models of priority algorithms with advice. While an online algorithm with advice only uses the advice string to make decisions for requests, a priority algorithm must also decide what priority order to use. It is subject to modeling whether or not the priority order should also be allowed to depend on the advice. Boyar et al.~\cite{boyar2019advice} distinguish four possible models with different priority-advice dependencies. However, since both our upper and our lower bounds hold in all four models, we will not assume any specific one of them.

\subsection{Disjoint Path Allocation}\label{sec:introdpa}
The Disjoint Path Allocation problem (\textnormal{DPA}) is a standard online optimization problem. Motivated by the real-world problem of routing calls in communication networks, the \textnormal{DPA} problem consists in allocating edge-disjoint paths between pairs of vertices in a graph. More concretely, the input of an online algorithm for \textnormal{DPA} consists of some graph $G$ in a fixed class of graphs -- we will mainly consider the classes of paths and trees -- followed by a sequence of vertex pairs in $G$, where pairs are revealed one after the other. For each request, i.e., vertex pair $[x,y]$, that is revealed, the online algorithm must immediately decide whether to \emph{accept} or \emph{reject} it. If the algorithm accepts $[x,y]$, it must allocate an $x$-$y$-path in $G$ that does not share an edge with any previously allocated path.\footnote{If the underlying graph $G$ is itself a path, the pair $[x,y]$ can be interpreted as the closed interval between $x$ and $y$ -- thus the use of square brackets. We identify $[x,y]$ with $[y,x]$.} The decision for $[x,y]$ may only depend on previously received vertex pairs and on the graph $G$. The goal is to accept as many vertex pairs as possible.

The classical \textnormal{DPA} problem (also simply referred to as \textnormal{DPA}) considers the problem on the graph class of paths. Generalizing \textnormal{DPA} to trees results in the so-called CAT problem (\emph{Call Admission on Trees}). Note that these versions of the problem allow to somewhat simplify the setting, since there is only one $x$-$y$-path between every vertex pair $[x,y]$ in a tree. Thus, an online or priority algorithm for \textnormal{DPA} on trees and paths must essentially only make a binary decision for each request, namely whether to accept or reject it.

The \textnormal{DPA} problem is well-studied in the online framework with and without advice. Advice-free online \textnormal{DPA} on various graph classes is discussed in Chapter 13 of the textbook by Borodin and El-Yaniv \cite{borodin2005online}. Online algorithms with advice for \textnormal{DPA} on paths are analyzed by Böckenhauer et al.~\cite{bockenhauer2017online}, by Barhum et al.~\cite{barhum2014power}, and discussed in Chapter 7 of the textbook by Komm \cite{komm2016introduction}. Böckenhauer et al.~analyze online algorithms with advice for \textnormal{DPA} on the graph class of trees \cite{bockenhauer2019call} and on the graph class of grids \cite{bockenhauer2018call}.

We will also consider a natural variant of classical \textnormal{DPA} that was introduced by Burjons et al.~\cite{burjons2018length}, namely the Length-Weighted \textnormal{DPA} Problem (\textnormal{LWDPA}). Inspired by the real-world problem of trying to maximize a meeting room's occupancy, the goal in \textnormal{LWDPA} consists not in maximizing the number of accepted paths, but rather their total length.

In \cref{sec:dpapaths}, we analyze the classical \textnormal{DPA} problem on paths. In \cref{sec:LWDPA}, we consider the related problem of Length-Weighted \textnormal{DPA}. In \cref{sec:CAT}, we analyze \textnormal{DPA} on trees, and finally, in \cref{sec:conclusion}, we very briefly discuss \textnormal{DPA} on the class of grids.

\subsection{Preliminaries}\label{sec:preliminaries}
There is a certain subtlety that must be considered when applying the priority framework to the online problem \textnormal{DPA}. Namely, we do not want what we informally consider to be the first input item a priority algorithm for \textnormal{DPA} receives -- the underlying graph $G$ on which the given instance is defined -- to be a request to which a priority can be assigned. We thus define requests of the priority problem \textnormal{DPA} to be of the form $[x,y]^G$, where $G=(V,E)$ is some graph, $x \neq y \in V$, and we identify $[x,y]^G$ with $[y,x]^G$. Hence, the underlying graph does not need to be communicated to the algorithm separately. Informally, we will still think of $G$ as being the first input item that is presented, even though we technically assume that it is only revealed together with the first request.

As mentioned previously, we typically distinguish different variants of the \textnormal{DPA} problem with respect to the graph class from which the underlying graph is chosen, e.g., the class of paths or the class of trees. Thus, we define the universe of all possible requests to be the set $\mathcal{U}\,:=\{\,[x,y]^G\;|\;G \in \mathcal{G},\;x \neq y \in V(G)\,\}$, where $\mathcal{G}$ is some arbitrary graph class. An \emph{instance} of \textnormal{DPA} on $\mathcal{G}$ is then simply a subset $I \subseteq \mathcal{U}$ with $[x,y]^G,\,[v,w]^H \in I\,\Rightarrow \,G=H$. Note that the same request cannot appear twice in an instance.

As explained above, a priority algorithm for \textnormal{DPA} chooses a priority order $\prec$ on $\mathcal{U}$ and an instance $I$ is then presented to the algorithm in accordance with this order -- such that $\max_{\prec}I$ is presented first. A fixed-priority algorithm chooses a fixed order, while an adaptive-priority algorithm chooses a new order after each request. Note that we assume that $\prec$ is total in order to simplify notation -- but we obviously never care about the order relation between requests defined on distinct graphs, because such requests never appear in the same instance. In fact, we often do not even require the order on the subsets of $\mathcal{U}$ that correspond to sets of requests belonging to the same graph to be total. Since, by the Szpilrajn extension theorem \cite{szpilrajn1930extension}, every partial order on a set can be extended to a total order, we will thus only define partial orders on $\mathcal{U}$.

On a further note, as explained earlier, we may simplify the setting if every graph $G\in \mathcal{G}$ is cycle-free. Since paths in such a graph are uniquely determined by their end points, we can identify the request $[x,y]^G$ with the unique $x$-$y$-path in $G$. The algorithm must then only decide whether to accept or reject the path $[x,y]^G$. Furthermore, this identification allows us to use certain helpful terminology -- such as that two requests $[x,y]^G$ and $[v,w]^G$ \emph{intersect} if the corresponding paths edge-intersect. If $[x,y]^G$ cannot be accepted because it intersects the previously accepted request $[v,w]^G$, we say that $[v,w]^G$ \emph{blocks} $[x,y]^G$.

For a priority algorithm \textsc{ALG} and an instance $I$, we define $\textsc{ALG}(I) \subseteq I$ to be the set of requests accepted by \textsc{ALG} and $\textsc{OPT}(I) \subseteq I$ to be an optimal solution for $I$. If $\textsc{ALG}$ uses the advice string $\Phi$, we write $\textsc{ALG}^{\Phi}(I)$. For the \textnormal{DPA} problem, we denote by $\abs{\textsc{ALG}(I)}$ and $\abs{\textsc{OPT}(I)}$ the sizes of the corresponding sets, i.e., the number of requests accepted by \textsc{ALG} and the optimal solution, respectively. For the related problem of \textnormal{LWDPA} -- where the goal consists in maximizing the total length of the accepted paths -- we denote that quantity by $\abs{\textsc{ALG}(I)}$ or $\abs{\textsc{OPT}(I)}$. We refer to $\abs{\textsc{ALG}(I)}$ and $\abs{\textsc{OPT}(I)}$ as the \emph{gains} on $I$ of \textsc{ALG} and the optimal solution, respectively.

We say that \textsc{ALG} is \emph{strictly $c$-competitive} for some $c:\;\mathcal{G} \to \R_{\ge1}$ if, for every $G \in \mathcal{G}$ and every instance $I$ defined on $G$,
$\abs{\textsc{OPT}(I)}\,/\,\abs{\textsc{ALG}(I)}\leq c(G)$.
The \emph{strict competitive ratio} or \emph{approximation ratio} of \textsc{ALG} is the pointwise infimum over all $c$ for which \textsc{ALG} is strictly $c$-competitive. \textsc{ALG} is optimal if it has a strict competitive ratio of $1$.

As for priority algorithms with advice, we say that such an algorithm is strictly $c$-competitive with advice complexity $b:\;\mathcal{G} \to \N$ if, for every $G\in \mathcal{G}$ and every instance $I$ defined on $G$, there exists an advice string $\Phi$ such that $\abs{\textsc{OPT}(I)}\,/\,\abs{\textsc{ALG}^{\Phi}(I)}\leq c(G)$ and each of the algorithm's decisions only depends on the first $b(G)$ bits of $\Phi$.

Note that often, the study of online algorithms is centered around the analysis of their \emph{non-strict} competitive ratio (often just called competitive ratio), where \textsc{ALG} is (non-strictly) $c$-competitive for some $c \in \R_{\ge1}$ if there exists a constant $\alpha \in \R$ such that, for every instance $I$, $\abs{\textsc{OPT}(I)} \leq c \cdot \abs{\textsc{ALG}(I)} + \alpha$. The additive constant $\alpha$ allows \textsc{ALG} to be c-competitive even if there exist finitely many instances $I$ for which $\abs{\textsc{OPT}(I)}\,/\,\abs{\textsc{ALG}(I)} > c$. However, since there are only finitely many instances of \textnormal{DPA} that are defined on the same underlying graph and since we want to analyze competitiveness with respect to that graph, we focus on analyzing the strict competitive ratio. Note however that all lower bounds on the strict competitive ratio that are presented in this paper can easily be translated into lower bounds on the non-strict competitive ratio if we dispense with letting the latter depend on the underlying graph.\footnote{This can be done by embedding multiple copies of the construction used in the respective proof in a suitable graph.}
\newpage
\section{DPA on Paths}\label{sec:dpapaths}

We begin with the classical variant of the \textnormal{DPA} problem -- namely \textnormal{DPA} on the graph class of paths, $\mathcal{P}$. DPA on paths is essentially a discrete variant of the interval scheduling problem with unit profit on a single machine -- an optimization problem that was already studied in the priority framework by Borodin et al.~\cite{borodin2003incremental}. Both problems are solved optimally by a fixed-priority algorithm without advice, a fact independently shown by Carlisle and Lloyd \cite{DBLP:journals/dam/CarlisleL95} and by Faigle and Nawijn \cite{DBLP:journals/dam/FaigleN95}. Below, we give a proof that is specific to the \textnormal{DPA} problem.

\begin{theorem}[{Carlisle and Lloyd \cite{DBLP:journals/dam/CarlisleL95} and Faigle and Nawijn \cite{DBLP:journals/dam/FaigleN95}}]\label{thm:optdpa}
There is an optimal fixed-priority algorithm without advice for \textnormal{DPA} on paths.
\end{theorem}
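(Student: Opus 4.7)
The plan is to exhibit the classical earliest-right-endpoint-first greedy algorithm as a fixed-priority algorithm and establish its optimality by a standard exchange argument on right endpoints.

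Arbitrarily label the vertices of each path $G \in \mathcal{P}$ as $v_1, \ldots, v_{|V(G)|}$ consistently with the path structure, and set $\ell([v_i, v_j]^G) := \min(i,j)$ and $r([v_i, v_j]^G) := \max(i,j)$. Define a partial order $\prec$ on $\mathcal{U}$ by declaring $[x,y]^G \prec [v,w]^G$ whenever $r([x,y]^G) < r([v,w]^G)$, extended to a total order via the Szpilrajn theorem. The priority algorithm \textsc{ALG} then accepts each incoming request iff its path is edge-disjoint from all previously accepted ones; this is clearly a fixed-priority procedure that uses no advice.

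For optimality, fix an instance $I$ on a path $G$, and let $a_1, \ldots, a_k$ be the requests accepted by \textsc{ALG} and $o_1, \ldots, o_m$ an optimal solution, both sorted by increasing right endpoint. These sequences are strictly increasing, since edge-disjoint subpaths of $G$ necessarily have distinct right endpoints. The crux of the argument is to show $r(a_i) \le r(o_i)$ for every $i \le \min(k, m)$ by induction on $i$. The base case $i = 1$ holds because $a_1$ is the request of $I$ with the globally smallest right endpoint. For the inductive step, the hypothesis combined with $r(o_{i-1}) < r(o_i)$ gives $r(a_l) < r(o_i)$ for every $l < i$, so $a_1, \ldots, a_{i-1}$ are all processed before \textsc{ALG} encounters $o_i$; edge-disjointness of consecutive $o_j$'s then yields $\ell(o_i) \ge r(o_{i-1}) \ge r(a_l)$ for each $l < i$, so $o_i$ does not conflict with any of $a_1, \ldots, a_{i-1}$. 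Let $B$ denote the set of requests accepted by \textsc{ALG} before it processes $o_i$, so that $\{a_1, \ldots, a_{i-1}\} \subseteq B$. If $|B| \ge i$, then $a_i \in B$ and hence $r(a_i) \le r(o_i)$; otherwise $B = \{a_1, \ldots, a_{i-1}\}$, and greediness forces \textsc{ALG} to accept $o_i$, giving $a_i = o_i$.

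Applying the analogous argument at $i = k + 1$ under the assumption $m > k$ would force \textsc{ALG} to accept a $(k+1)$-st request, a contradiction; hence $k \ge m$ and \textsc{ALG} is optimal. The main obstacle in executing this plan lies in the inductive step, where one must guard against the possibility that \textsc{ALG} has meanwhile accepted requests outside the fixed optimum that could, a priori, block $o_i$; the case split on $|B|$ handles this cleanly by observing that any such extra acceptances already suffice to bound $r(a_i)$ from above.
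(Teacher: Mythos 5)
Your algorithm is exactly the one the paper uses -- the fixed-priority order by increasing right endpoint together with the plain greedy acceptance rule -- but your optimality argument takes a genuinely different route. The paper proceeds by induction on the size of the instance: it peels off the last request $p$ accepted by \textsc{GREEDY} together with all lower-priority requests intersecting it, observes that all of these contain the last edge of $p$ (so the optimum loses at most one request while \textsc{GREEDY} loses exactly one), and recurses on the reduced instance. You instead run the classical ``greedy stays ahead'' exchange argument, proving $r(a_i)\le r(o_i)$ by induction on the index $i$ and deriving a contradiction from $m>k$. Both arguments are correct and rest on the same structural fact about intervals on a path (two edge-disjoint subpaths cannot share a right endpoint, and a request whose left endpoint lies strictly left of another's right endpoint must contain that request's last edge); your version has the advantage of comparing the greedy solution directly against a fixed optimum without modifying the instance, while the paper's instance-surgery induction is the template it then reuses almost verbatim for \textnormal{LWDPA} and for \textnormal{DPA} on trees, which is presumably why it is phrased that way. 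Your handling of the one delicate point -- that \textsc{ALG} may have accepted requests outside the optimum that could block $o_i$ -- via the case split on $\lvert B\rvert$ is clean and correct.
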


\begin{proof}
For a path $P\in \mathcal{P}$ of length $l$, we assume that its vertex set is $V(P)=\{0,\,1,\,2,\,...\,,\,l\}$, with an edge between vertices $i$ and $i+1,\;i\in V(P)\setminus \{l\}$. We say that the vertex $0$ is the left end and the vertex $l$ the right end of $P$.

Let $\prec$ order requests according to their right end points -- such that priority increases the further to the left the right end point is. More formally, define for each $P\in \mathcal{P},\;x<y$ and $x'<y' \in V(P)$:
\[ y<y' \implies [x,\,y]^P \succ [x',\,y']^P \]
and extend this to some total order on $\mathcal{U}$. With this priority order, consider the simple priority algorithm \textsc{GREEDY} that accepts a request whenever it is not already blocked by a previously accepted request. It follows by induction on the number of requests of an instance that \textsc{GREEDY} is optimal.

On instances consisting of only one request, this is trivial. Thus, let $I$ be an instance with at least two requests and say that $I$ is defined on the path $P$.

Let $p \in {\textsc{GREEDY}(I)}$ be the last request in $I$ that is accepted by \textsc{GREEDY}, i.e., $p:=\min_{\prec}{\textsc{GREEDY}(I)}$. Consider the instance $I'$ consisting of the same requests as $I$, except that $p$ and all requests with lower priority than $p$ that intersect it are omitted. Since all requests following $p$ in $I$ are rejected by \textsc{GREEDY}, they must intersect previously accepted requests. In particular, requests with lower priority than $p$ that also belong to the instance $I'$ must intersect accepted requests preceding $p$, since they do not intersect $p$ itself by definition of $I'$. Therefore, they are also rejected by \textsc{GREEDY} when processing the instance $I'$, because the two instances do not differ on requests preceding $p$ and thus \textsc{GREEDY}'s decisions on these requests do neither, i.e., the requests blocking paths in $I \cap I'$ with lower priority than $p$ are accepted both on $I$ as well as on $I'$. It follows that $|\textsc{GREEDY}(I)|=|\textsc{GREEDY}(I')|+1$.

Write $p=[p_1,\,p_2]^P,\;p_1<p_2 \in V(P)$, and consider now the requests in $I$ that are not included in $I'$, i.e., those that follow and intersect $p$, including $p$ itself. By definition of the priority order, for all such requests $[x,\,y]^P$ it holds that $y\geq p_2$. Since $[x,\,y]^P$ intersects $p$, we have $x<p_2$. Combining these two inequalities shows that the edge $\{p_2-1,\,p_2\}$ of $P$ is always contained in $[x,\,y]^P$. Thus, at most one of the requests not included in $I'$ can be part of the optimal solution $\textsc{OPT}(I)$ of $I$, since two such requests would intersect on the edge $\{p_2-1,\,p_2\}$. This yields $|\textsc{OPT}(I)|\leq|\textsc{OPT}(I')|+1$. Applying the induction hypothesis to $I'$ concludes the proof:
\[\frac{\abs{\textsc{OPT}(I)}}{\abs{\textsc{GREEDY}(I)}} \leq \frac{\abs{\textsc{OPT}(I')}+1}{|\textsc{GREEDY}(I')|+1} = \frac{\abs{\textsc{GREEDY}(I')}+1}{\abs{\textsc{GREEDY}(I')}+1}=1.\qedhere\]
\end{proof}

In contrast to this optimality result, there is a tight lower bound of $l$ -- the length of the underlying path graph -- on the strict competitive ratio for \textnormal{DPA} on paths in the standard online setting, as seen in the textbook by Borodin and El-Yaniv \cite{borodin2005online}.

As we will see in Section \ref{sec:CAT}, the above result generalizes. Namely, the \textnormal{DPA} problem on trees with maximum degree at most 3 -- which paths obviously are -- is also solved optimally by a fixed-priority greedy algorithm requiring no advice.

However, before considering the more general problem of \textnormal{DPA} on trees, we consider the Length-Weighted \textnormal{DPA} problem (\textnormal{LWDPA}). Besides being an interesting problem in the priority-with-advice framework in its own right, the analysis of \textnormal{LWDPA} also serves to introduce a few techniques that will be important in the analysis of \textnormal{DPA} on trees.

\section{Length-Weighted DPA}\label{sec:LWDPA}

\textnormal{LWDPA} is conceptually similar to the interval scheduling problem with proportional profits on a single machine. This optimization problem was also studied in the advice-free priority setting by Borodin et al.~\cite{borodin2003incremental}. The tight bound of 3 on the approximation ratio for the latter problem corresponds to an upper bound of $(3-3/l)$ and a lower bound of $(3-\mathcal{O}(l^{-1/3}))$ for \textnormal{LWDPA}. The proofs of the following two results are based on constructions by Borodin et al.~\cite{borodin2003incremental}.

\begin{theorem}\label{thm:uplwdpa}
There is a strictly $(3-3/l)$-competitive fixed-priority algorithm without advice for \textnormal{LWDPA}.
\end{theorem}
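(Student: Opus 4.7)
The plan is to analyze the natural priority algorithm \textsc{GREEDY} that orders requests by length, longest first (with any tie-breaking extending to a total order on $\mathcal{U}$), and accepts each request that does not conflict with a previously accepted one. Strict $(3 - 3/l)$-competitiveness would then follow from a charging argument that assigns each $\textsc{OPT}$ path to a path in $\textsc{GREEDY}(I)$ of at least the same length.

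Concretely, for each $q \in \textsc{OPT}(I)$: if $q \in \textsc{GREEDY}(I)$, charge $q$ to itself; otherwise, $q$ was rejected by \textsc{GREEDY} because it conflicted with some previously accepted path $p$, which by the length-based priority order satisfies $|p| \ge |q|$, so charge $q$ to this $p$. Writing $C_p$ for the total length of paths charged to $p$, this yields $\abs{\textsc{OPT}(I)} = \sum_{p \in \textsc{GREEDY}(I)} C_p$, and it suffices to show $C_p \le (3 - 3/l) \cdot k$ for every $p$ of length $k$.

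To bound $C_p$, write $p = [a, a+k]^P$ and note that the paths charged to $p$ are pairwise disjoint $\textsc{OPT}$ paths conflicting with $p$, each of length at most $k$. Such a path either (i) crosses the left boundary of $p$, (ii) crosses the right boundary, or (iii) lies entirely inside $[a, a+k]^P$. By edge-disjointness, at most one left-crossing path $q_L$ and at most one right-crossing path $q_R$ can occur; moreover, since $q_L$ must extend at least one edge outside $p$, it covers at most $k - 1$ of $p$'s edges, and analogously for $q_R$. The inside paths can cover only the remaining edges of $p$, yielding, after careful tallying, $C_p \le 3k - 2$ in the worst case. Independently, since the charged paths are pairwise disjoint within a path graph of length $l$, we also have $C_p \le l$.

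These two bounds together give the competitive ratio: for $k \le 2l/3$, $3k/l \le 2$ implies $C_p \le 3k - 2 \le (3 - 3/l)k$; for $k > 2l/3$, a short calculation using $l \ge 2$ shows $C_p \le l \le (3 - 3/l)k$. Summing over $p \in \textsc{GREEDY}(I)$ completes the proof. The main obstacle is the discrete bookkeeping for the $C_p \le 3k - 2$ bound, where the key observation is that a boundary-crossing $\textsc{OPT}$ path must strictly extend outside $p$ by at least one edge—it is precisely this $-2$ improvement per accepted path, combined with the disjointness bound $C_p \le l$, that turns the continuous ratio $3$ into the factor $3 - 3/l$ in the discrete setting.
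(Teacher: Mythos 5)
Your proof is correct, but it takes a genuinely different route from the paper's. The paper also runs \textsc{GREEDY} with longest-first priority, but it fixes a specific tie-breaking rule (among equal-length requests, the leftmost has higher priority) and argues by induction on the instance size: it removes the last accepted request $p$ together with everything it blocks, and the tie-breaking guarantees that a left-crossing blocked request has length at most $\abs{p}-1$, so the total length removed from $\textsc{OPT}$ is at most $3\abs{p}-3 = (3-3/\abs{p})\abs{p} \le (3-3/l)\abs{p}$, which closes the induction in one line. You instead use a global charging argument with an arbitrary tie-breaking, which only yields the weaker per-path bound $C_p \le 3k-2$ (tight in your setting, since a same-length left-crossing path may now legitimately follow $p$); you recover the $-3$ by invoking the second bound $C_p \le l$ and splitting on $k \lessgtr 2l/3$, where $(3-3/l)\cdot(2l/3) = 2l-2 \ge l$ handles the long paths. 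The trade-off is that the paper's argument is shorter and localizes the entire savings in the tie-breaking rule, whereas yours is more robust: it shows the bound holds for \emph{every} longest-first greedy algorithm regardless of how ties are broken, at the cost of the extra case distinction. (Both arguments share the same degenerate edge case at $l=1$, where $3-3/l$ is meaningless, so this is not a defect of yours.)
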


\begin{proof}
In accordance with the notation $\abs{\textsc{ALG}(I)}$ for the total length of the paths in $\textsc{ALG}(I)$, we denote by $\abs{[x,\,y]^P}$ the length of the unique $x$-$y$-path in $P$.

We use a priority order that orders requests primarily by non-increasing length and secondarily by their position on the path, the leftmost taking precedence. More formally, define, for each $P\in \mathcal{P},\;x<y$ and $x'<y' \in V(P)$: 
\[ (y-x>y'-x')\;\textnormal{or}\; (y-x=y'-x'\;\textnormal{and}\;x<x') \implies [x,\,y]^{P} \succ [x',y']^{P}.  \]
With this priority order, we again consider the fixed-priority algorithm \textsc{GREEDY} that accepts every request it can. We use induction on the size of an instance.

On instances containing only one request, the result holds trivially since \textsc{GREEDY} is optimal. Thus consider an instance $I$ consisting of at least two requests. Let $p \in {\textsc{GREEDY}(I)}$ be the last request in $I$ that is accepted by \textsc{GREEDY}. Denote further by $L$ the length of the subpath of $P$ constructed as the union of those paths of $I$ which follow and intersect $p$ (including $p$ itself). Define the instance $I'$ again as $I$ minus $p$ and all requests following and intersecting it (thus requests of $I$ which, when combined, yield a path of length $L$ are omitted in $I'$).

Since all requests following $p$ in $I$ are rejected by \textsc{GREEDY}, they must intersect previously accepted requests. In particular, requests with lower priority than $p$ that also belong to $I'$ must intersect accepted requests preceding $p$ since they do not intersect $p$ itself by definition of $I'$. Therefore, they are also rejected by \textsc{GREEDY} when processing the instance $I'$, because the two instances do not differ on requests preceding $p$ and thus the algorithm's decisions on them do neither. It follows that $|\textsc{GREEDY}(I)|=|\textsc{GREEDY}(I')|+|p|$.

Since an offline algorithm processing $I'$ can simply accept those requests that are also part of a fixed optimal solution for $I$, we have the simple lower bound $|\textsc{OPT}(I')|\geq|\textsc{OPT}(I)|-L$. It remains to bound $L$.

\begin{figure}[t]
\centering
\resizebox{\textwidth}{!}{
\begin{tikzpicture}
\node[circle,fill=black, inner sep=0, minimum size=4] at (0,0) (node1) {};
\foreach \n in {2,...,19}{
		\pgfmathtruncatemacro{\nminusone}{\n - 1}
       \node[circle,fill=black, inner sep=0, minimum size=4] at (\nminusone,0) (node\n) {};
       \draw[line width=1] (node\nminusone)--(node\n);
    }
\node[circle, inner sep=0, minimum size=3] at (6,-0.5) (p1) {};
\node [circle, inner sep=0, minimum size=3] at (11,-0.5) (p2) {};
\draw [line width=1.5] (p1)--(p2);

\node[circle] at (8.5,-1) (label) {$p$};
\node[circle] at (5,0.5) (label) {$q_1$};
\node[circle] at (8.5,0.5) (label) {$q_2$};
\node[circle] at (12.5,0.5) (label) {$q_3$};

\node [circle, inner sep=0, minimum size=3] at (3,0.95) (a1) {};
\node [circle, inner sep=0, minimum size=3] at (7,0.95) (a2) {};
\draw [line width=1.5] (a1)--(a2);

\draw [line width=1, decorate,
    decoration = {brace}] (3, 1.35) --  (7, 1.35);

\node[circle] at (5, 1.85) (label) {$\leq \abs{p}-1$};

\node [circle, inner sep=0, minimum size=3] at (7,0.95) (b1) {};
\node [circle, inner sep=0, minimum size=3] at (10,0.95) (b2) {};
\draw [line width=1.5] (b1)--(b2);

\node [circle, inner sep=0, minimum size=3] at (10,0.95) (c1) {};
\node [circle, inner sep=0, minimum size=3] at (15,0.95) (c2) {};
\draw [line width=1.5] (c1)--(c2);

\draw [line width=1, decorate,
    decoration = {brace}] (10,1.35) --  (15,1.35);

\node[circle] at (12.5, 1.85) (label) {$\leq \abs{p}$};

\end{tikzpicture}

}

\caption{Example of three requests $q_1$, $q_2$ and $q_3$ following and intersecting the path $p$ that attain the maximal combined length $3|p|-3$. Note that $q_1$ must have length at most $\abs{p}-1$ and $q_3$ must have length at most $\abs{p}$ since otherwise, they would precede $p$ by definition of the priority order.}
\label{figuplwdpa}
\end{figure}
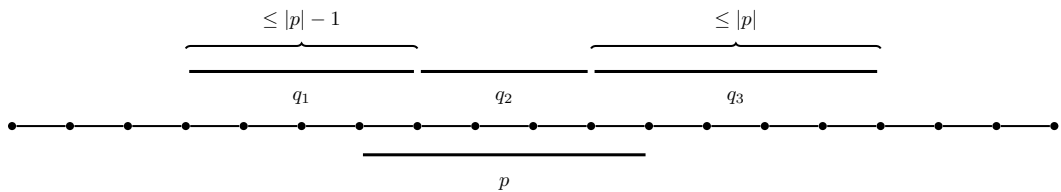

By definition of the priority order, all requests having lower priority than $p$ have length at most $\abs{p}$. To be more precise, requests following and intersecting $p$ either have their left end point to the left of $p$'s left end point but length at most $|p|-1$, or their left end point is contained in $p$, in which case they can have length at most $|p|$ (see \cref{figuplwdpa}). Thus it is easy to see that $L\leq3|p|-3$.

Using the inequalities above, we get \[\frac{|\textsc{OPT}(I)|}{|\textsc{GREEDY}(I)|}\leq \frac{|\textsc{OPT}(I')|+3|p|-3}{|\textsc{GREEDY}(I')|+|p|}\]
Applying the induction hypothesis to $I'$ and using the fact that $\abs{p}\leq l$ yields the desired result:
\begin{align*}
\frac{|\textsc{OPT}(I')|+ 3|p|-3}{|\textsc{GREEDY}(I')|+|p|}
& \leq\frac{(3-3/l)\;|\textsc{GREEDY}(I')|+|p|\;(3-3/|p|)}{|\textsc{GREEDY}(I')|+|p|}\\
&\leq \frac{(3-3/l)\;|\textsc{GREEDY}(I')|+|p|\;(3-3/l)}{|\textsc{GREEDY}(I')|+|p|}\\
&=3-3/l.\qedhere	
\end{align*}
\end{proof}

\begin{theorem}\label{lowlwdpa}
The approximation ratio of any adaptive-priority algorithm without advice for \textnormal{LWDPA} is at least $(3-\mathcal{O}(l^{-1/3}))$.
\end{theorem}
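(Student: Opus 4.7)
The plan is to adapt the adversary construction of Borodin, Nielsen, and Rackoff~\cite{borodin2003incremental} for interval scheduling with proportional profits to our setting. The starting point is the tightness of the three-request configuration in \cref{figuplwdpa}: for any request $p$ of length $k$, three pairwise non-conflicting requests $q_1, q_2, q_3$ can be arranged so that each conflicts with $p$ and $\abs{q_1} + \abs{q_2} + \abs{q_3} = 3k - 3$. If the adversary can coerce the algorithm into accepting $p$ while rejecting all three $q_j$, the local gain ratio is $(3k-3)/k = 3 - 3/k$. Setting $k := \lceil l^{1/3} \rceil$ will produce the claimed error term $O(l^{-1/3})$ while leaving enough room on $P$ to tile the path with $\Theta(l^{2/3})$ such traps, which is what is needed for the adversarial argument across blocks.

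The construction partitions $P$ into $m := \lfloor l/(3k-3) \rfloor = \Theta(l^{2/3})$ disjoint blocks of length $3k - 3$. In each block $B_i$, I would predefine the trap $(p_i, q^i_1, q^i_2, q^i_3)$ mirroring \cref{figuplwdpa}. To handle adaptive priority, each block is additionally equipped with a small family of ``shifted'' trap variants, i.e., multiple candidate main requests at slightly different offsets within the block, each accompanied by its own $q$'s. The adversary then plays interactively against the algorithm, classifying each block as \emph{pending}, \emph{activated}, or \emph{discarded}. When the next highest-priority request is a main request $p$ of a pending block, the adversary includes $p$ together with its three siblings and marks the block activated; the algorithm then accepts $p$ and is forced to reject the siblings. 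When the next request is a $q$-type of a pending block, the adversary either abandons that block or switches to a shifted variant, depending on which blocks remain activatable.

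The final accounting collects contributions block-by-block. If a $(1 - O(l^{-1/3}))$-fraction of blocks end up activated, then $\abs{\textsc{ALG}(I)} \le m k + o(l)$ while $\abs{\textsc{OPT}(I)} \ge m(3k-3) = l - O(l^{2/3})$, yielding a ratio of $3 - O(l^{-1/3})$ as claimed.

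The main obstacle is the adversary's response to an algorithm that orders all $q$-type requests before all $p$-type requests: a naive strategy would then discard every block before examining a single $p_i$. Overcoming this requires the enriched per-block trap family, together with a pigeonhole-style argument that guarantees, for sufficiently many blocks, at least one candidate main request precedes all of its associated $q$'s in the algorithm's priority order. The number of shifted variants per block trades off against the block size $k$; optimizing this tradeoff gives exactly the $l^{-1/3}$ error exponent, and explains why the bound is weaker than the $3 - O(1/l)$ that a single-block construction would naively suggest. This pigeonhole-and-bookkeeping step, together with the accounting of partial activations across blocks, is the most delicate part of the proof; the remaining estimates are a routine adaptation of the analysis underlying \cref{thm:uplwdpa}.
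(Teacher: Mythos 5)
There is a genuine gap at exactly the point you flag as ``the most delicate part.'' In your flat tiling, every block consists of a main request of length $k$ and three siblings of length at most $k$, so \emph{all} requests in play have comparable length. An adaptive-priority algorithm can therefore rank every $q$-type request of every shifted variant of every block ahead of every main request; this is a single global order, so no pigeonhole over finitely many variants per block can produce a block in which some candidate main request precedes all of its siblings. Whenever a sibling arrives first and the adversary includes it, the algorithm accepts it, the main request is blocked, and the remaining two (pairwise disjoint) siblings are accepted as well -- the algorithm achieves ratio $1$ on that block. Whenever the adversary instead withholds the sibling and switches variants, the same thing happens to the next variant. So the ``siblings-first'' order spoils every block, and the claimed $(1-O(l^{-1/3}))$-fraction of activated blocks is unobtainable. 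The underlying issue is that a short request having top priority is only exploitable if accepting it blocks something \emph{much longer}, and your construction contains nothing much longer.

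The paper's proof resolves this with a different construction: a single nested ``pyramid'' of requests $p_1,\dots,p_{2b-1}$ of lengths $a,2a,\dots,ba,\dots,2a,a$, each overlapping its predecessor on one edge, together with all unit-length requests $p_e$. Whichever request the algorithm's initial order ranks first, the adversary builds the instance around it: if it is some $p_i$, its two neighbours have combined length about $2|p_i|$ (lengths increase toward the apex, and the apex case forces $b=2(a+1)$), and unit fillers top the optimum up to $3|p_i|-2$; if it is a unit request, it blocks a path of length $a\ge 3$ and the ratio is already at least $3$. Since the strict competitive ratio is a worst case over instances, this single small trap suffices -- no tiling or block accounting is needed -- and $l=\Theta(a^3)$ is what yields the $l^{-1/3}$ exponent. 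If you want to salvage your approach, the ``shifted variants'' would have to be arranged so that the blockers of one variant are themselves the main requests of another at a larger length scale, which is essentially rediscovering the pyramid.
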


\begin{proof}
Let \textsc{ALG} be an adaptive-priority algorithm for \textnormal{LWDPA} reading no advice and, for $a,b\in \mathbb N_{\geq 3}$, define $P_{a,b}$ to be the path of length $l:=ab^2-2b+2$. This definition results from the following construction (see \cref{figlowlwdpa}): Let $p_1$ be the path $[0,a]^{P_{a,b}}$ and define, for $2\leq i \leq b$, the request $p_i$ to be the path of length $ia$ intersecting the previous path $p_{i-1}$ only on its rightmost edge. Mirror this construction for the paths $p_i$ with $b+1\leq i \leq 2b-1$, i.e., let $p_i$ here be the path of length $(2b-i)a$, also intersecting the previous path on a single edge.\footnote{Our path graph is just large enough to accommodate this construction with the above definition of $l$.} Additionally define, for every edge $e$ in the path $P_{a,b}$, the request $p_e$ to be the corresponding length-one path.

Depending on which of these paths has highest initial priority, we distinguish four cases. In every case, we consider some suitable instance defined on the graph $P_{a,b}$ on which \textsc{ALG} will be bad compared to an optimal solution.
\begin{description}
\item[\textit{Case~1.}] If, among the requests defined above, one of the requests $p_i$ for some $2\leq i\leq b-1$ has highest initial priority, consider the instance $I$ consisting of $p_i$, followed by $p_{i-1}$, $p_{i+1}$, and all length-one paths contained in $p_i\setminus (p_{i-1}\cup p_{i+1})$. If \textsc{ALG} rejects $p_i$, it has no approximation ratio at all since it has a gain of $0$ on the instance consisting only of $p_i$, while the optimal solution has a gain of $1$ on this instance. Thus, \textsc{ALG} must accept $p_i$ and must thus reject the remaining requests in $I$ since they intersect $p_i$. This yields
\begin{align*}
\frac{|\textsc{OPT}(I)|}{|\textsc{ALG}(I)|}&=\frac{|p_{i-1}|+|p_{i}|+|p_{i+1}|-2}{|p_i|}\\
&= \frac{(i-1)a+ia+(i+1)a-2}{ia}\\
&=3-\frac{2}{ia}\\
&\geq 3-\frac{1}{a}.
\end{align*}
The case where $p_i$ with $b+1\leq i\leq 2b-2$ has highest initial priority is clearly symmetric.

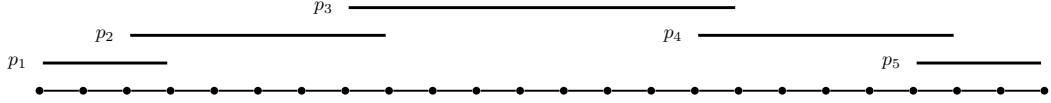
\begin{figure}[t]
\centering
\resizebox{\textwidth}{!}{
\begin{tikzpicture}

\node[circle,fill=black, inner sep=0, minimum size=4] at (0,0) (node1) {};
\foreach \n in {2,...,24}{
		\pgfmathtruncatemacro{\nminusone}{\n - 1}
       \node[circle,fill=black, inner sep=0, minimum size=4] at (\nminusone *18/23,0) (node\n) {};
       \draw[line width=1] (node\nminusone)--(node\n);
    }

\node [circle, inner sep=0, minimum size=3] at (0 *18/23,0.5) (p11) {};
\node [circle, inner sep=0, minimum size=3] at (3 *18/23,0.5) (p12) {};
\draw [line width=1.5] (p11)--(p12);

\node[circle] at (-0.5 *18/23,0.5) (label1) {$p_1$};

\node [circle, inner sep=0, minimum size=3] at (2 *18/23,1) (p21) {};
\node [circle, inner sep=0, minimum size=3] at (8 *18/23,1) (p22) {};
\draw [line width=1.5] (p21)--(p22);

\node[circle] at (1.5 *18/23,1) (label2) {$p_2$};

\node [circle, inner sep=0, minimum size=3] at (7 *18/23,1.5) (p31) {};
\node [circle, inner sep=0, minimum size=3] at (16 *18/23,1.5) (p32) {};
\draw [line width=1.5] (p31)--(p32);

\node[circle] at (6.5 *18/23,1.5) (label3) {$p_3$};

\node [circle, inner sep=0, minimum size=3] at (15 *18/23, 1) (p41) {};
\node [circle, inner sep=0, minimum size=3] at (21 *18/23, 1) (p42) {};
\draw [line width=1.5] (p41)--(p42);

\node[circle] at (14.5 *18/23,1) (label4) {$p_4$};

\node [circle, inner sep=0, minimum size=3] at (20 *18/23,0.5) (p51) {};
\node [circle, inner sep=0, minimum size=3] at (23 *18/23,0.5) (p52) {};
\draw [line width=1.5] (p51)--(p52);

\node[circle] at (19.5 *18/23,0.5) (label5) {$p_5$};

\end{tikzpicture}

}

\caption{The construction used in the proof of \cref{lowlwdpa} for $a=b=3$. In addition to the paths $p_1,...,p_5$ depicted above, the proof uses, for each edge $e$ in the graph, the length-one path $p_e$.}
\label{figlowlwdpa}
\end{figure}

\item[\textit{Case~2.}] If $p_b$ has highest initial priority, let $I$ consist of $p_b$, followed by $p_{b-1}$, $p_{b+1}$ and the length-one requests contained in $p_b\setminus (p_{b-1}\cup p_{b+1})$. Again, \textsc{ALG} must accept the first request $p_b$ and must thus reject the other requests, which yields
\begin{align*}
\frac{|\textsc{OPT}(I)|}{|\textsc{ALG}(I)|}&=\frac{|p_{b-1}|+|p_{b}|+|p_{b+1}|-2}{|p_b|}\\
&= \frac{(b-1)a+ba+(b-1)a-2}{ba}\\
&=3-\frac{2(a+1)}{ba}.\\
\end{align*}
Setting $b:=2(a+1)$ yields the same $(3-1/a)$ lower bound as in the first case.

\item[\textit{Case~3.}] If $p_1$ has highest initial priority, let $I$ be the instance consisting of $p_1$, followed by $p_2$ and all length-one paths contained in $p_1\setminus p_2$. As in the previous cases, \textsc{ALG} must accept the first request $p_1$ and must thus reject the remaining requests, which yields that
\[\frac{|\textsc{OPT}(I)|}{|\textsc{ALG}(I)|}=\frac{|p_1|+|p_2|-1}{|p_1|}= \frac{a+2a-1}{a}=3-\frac{1}{a}.\]
The case where $p_1$'s mirror image $p_{2b-1}$ has highest initial priority is clearly analogous.

\item[\textit{Case~4.}] If one of the paths $p_e$ 	for some edge $e$ has highest initial priority, let $I$ consist of $p_e$, followed by one of the requests $p_i$ intersecting $p_e$. \textsc{ALG} must accept $p_e$ and reject $p_i$. Since $p_i$ has length at least $a\geq 3$, the optimal solution on the other hand consists in rejecting $p_e$ and accepting $p_i$, which 	gives \[\frac{|\textsc{OPT}(I)|}{|\textsc{ALG}(I)|}\geq \frac{a}{1}\geq 	3.\]
\end{description}

Plugging $b=2(a+1)$ into the definition of $l$ yields $1/a \in \mathcal{O}(l^{-1/3})$ and thus the desired result for all $l$ of the form $l= \abs{P_{a,2(a+1)}}$ for some $a \in \mathbb N_{\geq 3}$. For general $l$ of sufficient size, we use the above construction on the subpath $P_{a,2(a+1)}$ of the path of length $l$ for the largest possible $a \in \mathbb N_{\geq 3}$. For this $a$, we have that $\abs{P_{a+1,2(a+2)}} > l $ and thus $1/(a+1) \in \mathcal{O}(\abs{P_{a+1,2(a+2)}}^{-1/3}) \subseteq \mathcal{O}(l^{-1/3})$, which again implies $1/a \in \mathcal{O}(l^{-1/3})$.
\end{proof}

This asymptotically tight bound stands in contrast to a lower bound of $l$ in the standard online framework proved by Burjons et al.~\cite{burjons2018length}.

We now turn our attention to priority algorithms with advice for \textnormal{LWDPA}. First, we present an optimal fixed-priority algorithm with advice. We use a similar strategy to the one used by Burjons et al.~\cite{burjons2018length} for their optimal online algorithm with advice for \textnormal{LWDPA}. The strategy is to fix an optimal solution for the given instance and to encode this optimal solution in the advice string. Burjons et al.~do this by conveying to the algorithm for every vertex in the graph whether or not this vertex is a \emph{transition point}, i.e., a start point or end point of a path contained in the fixed optimal solution. The algorithm then simply checks for each path it is presented whether or not its start point and end point are consecutive transition points and accepts only if they are. The algorithm thus reproduces the fixed optimal solution.

Using this strategy yields that $(l-1)$ advice bits are sufficient for an online algorithm to be optimal since whether or not a vertex is a transition point needs to be conveyed for all but the first and the last vertex in the graph. However, if we instead consider this strategy in the more powerful priority framework, we find that fewer advice bits are sufficient. We require a preliminary definition.

\begin{definition}\label{greedopt}
	Let $I$ be an instance of \textnormal{DPA} or of \textnormal{LWDPA} and $\prec$ a fixed-priority order for the problem at hand. Write $I=\{r_1,\,r_2,\,...\,,\,r_n\}$, where $r_1 \succ r_2 \succ ... \succ r_n$. Inductively define, for $1 \leq k \leq n$ and $S_0:=\emptyset$,
\[
S_k := \begin{cases}
       S_{k-1} \cup \{ r_k \} & \parbox{.7\textwidth}{\text{if there is an} $R \subseteq \{r_{k+1},\,r_{k+2},\,...\,,\,r_n\}$ such that $S_{k-1} \cup \{r_k\} \cup R$ is an optimal solution for $I$, and}\\
        S_{k-1} & \text{otherwise.}
        \end{cases}  
\]
Define the \emph{greediest optimal solution} of $I$ to be 
$\textup{\textsc{OPT}}_{\textnormal{Gr}}(I):= S_n$. 
\end{definition}

It is clear that $\textsc{OPT}_{\textnormal{Gr}}(I)$ is in fact an optimal solution for $I$. More informally, it is the solution produced by an optimal offline algorithm that first computes all optimal solutions for a given instance and then operates on it in a sequential fashion (according to the priority order), accepting a request whenever possible, i.e., never rejecting a request if it could also arrive at an optimal solution by accepting that request. 

\begin{theorem}\label{uplwdpaadv}
There is an optimal fixed-priority algorithm for \textnormal{LWDPA} reading $3\lceil{l/4}\rceil$ advice bits.
\end{theorem}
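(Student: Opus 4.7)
The plan is to adapt the transition-point encoding strategy to the priority framework, reducing the advice to $3\lceil l/4\rceil$ bits. Specifically, I would partition the $l$ edges of $P$ into $\lceil l/4 \rceil$ consecutive blocks of up to $4$ edges each, encode $3$ bits per block describing the local structure of a chosen optimal solution, and design the algorithm to reconstruct this solution from the advice and the priority-ordered stream of requests.

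First, I would fix a priority order $\prec$ on $\mathcal{U}$. A natural candidate is the decreasing-length, leftmost-first order from \cref{thm:uplwdpa}, possibly refined so that block-aligned requests receive higher priority within each length class. For each instance $I$, the target solution is the greediest optimal solution $\textup{\textsc{OPT}}_{\textnormal{Gr}}(I)$ from \cref{greedopt}. I would store, in the advice for each block, one bit per strictly interior vertex of that block indicating whether it is a transition point of the target solution. Because each block has exactly three strictly interior vertices (the two boundary vertices are shared with the neighbouring blocks and are not stored explicitly), this uses $3$ bits per block and $3\lceil l/4 \rceil$ bits in total.

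Next, I would describe the algorithm. Processing requests in the order $\prec$, the algorithm accepts $[x,y]^P$ iff both $x$ and $y$ are classified as transition points with no transition point lying strictly between them. For an interior vertex, the classification is read directly from the corresponding advice bit; for a boundary vertex $v_{4i}$, the classification is given by a deterministic rule derived from the advice of the two adjacent blocks together with the already-presented requests incident to $v_{4i}$. The priority order would be arranged so that, whenever the algorithm must classify $v_{4i}$, all requests relevant to the rule have already appeared.

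The hard part will be specifying the boundary-vertex inference rule and proving that the resulting algorithm outputs an optimal solution on every instance. The rule must be computable from information available to the algorithm at decision time and must agree with the transition-point status of $v_{4i}$ in some optimal solution. I expect this to follow from a case analysis of how $\textup{\textsc{OPT}}_{\textnormal{Gr}}(I)$ restricts to each block, using the greediness guarantee of \cref{greedopt} to show that any residual ambiguity in the classification of a boundary vertex either matches the target or can be absorbed into an equally optimal alternative solution, in the spirit of the transition-point argument from the online setting of Burjons et al.~\cite{burjons2018length}.
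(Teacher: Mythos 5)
There is a genuine gap, and it sits exactly where you flag ``the hard part'': for your encoding, the boundary-vertex inference rule you postulate does not exist. Concretely, take $l=16$ with edge-blocks $v_0\ldots v_4$, $v_4\ldots v_8$, $v_8\ldots v_{12}$, $v_{12}\ldots v_{16}$, and compare $I_A=\{[v_0,v_8],[v_{10},v_{14}]\}$ with $I_B=\{[v_0,v_8],[v_0,v_4],[v_4,v_{10}],[v_{10},v_{14}]\}$. The unique optimal solution of $I_A$ is $\{[v_0,v_8],[v_{10},v_{14}]\}$ (gain $12$), while that of $I_B$ is $\{[v_0,v_4],[v_4,v_{10}],[v_{10},v_{14}]\}$ (gain $14$ versus at most $12$ for any solution containing $[v_0,v_8]$); hence $[v_0,v_8]$ must be accepted in $I_A$ and rejected in $I_B$. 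The transition points are $\{v_0,v_8,v_{10},v_{14}\}$ and $\{v_0,v_4,v_{10},v_{14}\}$, respectively: they differ only at the block boundaries $v_4$ and $v_8$, so your advice strings (one bit per strictly interior vertex) are identical. Since $[v_0,v_8]$ is the unique longest request in both instances, any length-first priority order presents it first, with empty history; a deterministic algorithm reading identical advice must then decide identically and fails on one of the two instances. No tie-breaking refinement and no inference rule can repair this, because the distinguishing information is precisely the boundary bits you chose not to write down.

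The paper's proof avoids this by changing \emph{what} is encoded rather than \emph{which vertices} are encoded. It records, for every vertex (four per block, the fifth vertex of each block being charged to the next one), whether that vertex is the \emph{left endpoint} of a path of length at least $2$ in $\textsc{OPT}_{\textnormal{Gr}}(I)$. Left endpoints of edge-disjoint paths of length at least $2$ are pairwise at distance at least $2$, so four consecutive vertices admit only $1+4+3=8$ configurations, i.e., $3$ bits per block with no vertex omitted. In the example above, the bit for $v_4$ then differs between $I_A$ and $I_B$ and resolves the decision: $[v_0,v_8]$ is rejected in $I_B$ because it internally contains the designated starting point $v_4$. Your transition-point encoding cannot be compressed in the same way, since transition points of an optimal solution may be adjacent. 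Finally, even with a correct encoding you would still need the paper's separate treatment of length-$1$ requests (they are excluded from the advice and accepted greedily at the end, which is justified by the greediness of $\textsc{OPT}_{\textnormal{Gr}}$), a point your sketch omits entirely.
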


\begin{proof}
The strategy is to convey the greediest optimal solution of the instance at hand to the algorithm -- which will again use the priority order defined the proof of \cref{thm:uplwdpa}.

Let $I$ be an instance of \textnormal{LWDPA} and $\textsc{OPT}_{\textnormal{Gr}}(I)$ the greediest optimal solution for $I$, as defined in \cref{greedopt}. We communicate to the algorithm the starting points of requests with length at least $2$ in $\textsc{OPT}_{\textnormal{Gr}}(I)$. We use a quasi-greedy algorithm that always accepts the first request that fits the conveyed starting point configuration. Furthermore, our algorithm greedily accepts unblocked paths of length $1$.

Assuming $\textsc{OPT}_{\textnormal{Gr}}(I)$ does not contain a length-one request, the claim is trivial: Since the algorithm receives long requests first, accepting the first request that starts at a designated starting point but does not internally contain the subsequent starting point is consistent with the optimal solution. Requests of length $1$, being the last requests to be presented to the algorithm by definition of the priority order -- must then all be blocked by previously accepted requests, i.e., requests in $\textsc{OPT}_{\textnormal{Gr}}(I)$, since otherwise $\textsc{OPT}_{\textnormal{Gr}}(I)$ could be extended to a better solution.

\begin{figure}[t]
\resizebox{\textwidth}{!}{
\begin{tikzpicture}

\node[circle,fill=black, inner sep=0, minimum size=4] at (0,0) (node1){};

\foreach \n in {2,...,27}{
		\pgfmathtruncatemacro{\nminusone}{\n - 1}
      \node[circle,fill=black, inner sep=0, minimum size=4] at (\nminusone*18/26,0) (node\n){};
       \draw[line width=1] (node\nminusone)--(node\n);
    }

\node[inner sep=0] at (0.5*2*18/26, 0.5) (p1) {};
\node[inner sep=0] at (3 *2 *18/26, 0.5) (p2) {};
\draw [line width=2] (p1)--(p2);

\node[inner sep=0] at (3.5 *2 *18/26, 0.7) (p3) {};
\node[inner sep=0] at (4.5 *2 *18/26, 0.7) (p4) {};
\draw [line width=2] (p3)--(p4);

\node[inner sep=0] at (4.5 *2 *18/26, 0.5) (b1) {};
\node[inner sep=0] at (5.5 *2 *18/26, 0.5) (b2) {};
\draw [line width=2] (b1)--(b2);

\node[inner sep=0] at (5.5 *2 *18/26, 0.7) (c1) {};
\node[inner sep=0] at (6 *2 *18/26, 0.7) (c2) {};
\draw [line width=2] (c1)--(c2);

\node[inner sep=0] at (6.5 *2 *18/26, 0.5) (d1) {};
\node[inner sep=0] at (7.5 *2 *18/26, 0.5) (d2) {};
\draw [line width=2] (d1)--(d2);

\node[inner sep=0] at (7.5 *2*18/26, 0.7) (e1) {};
\node[inner sep=0] at (8 *2 *18/26, 0.7) (e2) {};
\draw [line width=2] (e1)--(e2);

\node[inner sep=0] at (8 *2 *18/26, 0.5) (f1) {};
\node[inner sep=0] at (10 *2 *18/26, 0.5) (f2) {};
\draw [line width=2] (f1)--(f2);

\node[inner sep=0] at (10 *2 *18/26, 0.7) (g1) {};
\node[inner sep=0] at (11 *2 *18/26, 0.7) (g2) {};
\draw [line width=2] (g1)--(g2);

\node[inner sep=0] at (11.5 *2 *18/26, 0.5) (add1) {};
\node[inner sep=0] at (13 *2 *18/26, 0.5) (add2) {};
\draw [line width=2] (add1)--(add2);

\node[inner sep=0] at (0.5 *2 *18/26, 1.1) (h1) {};
\node[inner sep=0] at (2 *2 *18/26, 1.1) (h2) {};
\draw [line width=1] (h1)--(h2);

\node[inner sep=0] at (2 *2 *18/26, 1.3) (i1) {};
\node[inner sep=0] at (2.5 *2 *18/26, 1.3) (i2) {};
\draw [line width=1] (i1)--(i2);

\node[inner sep=0] at (2.5 *2 *18/26, 1.1) (j1) {};
\node[inner sep=0] at (3 *2 *18/26, 1.1) (j2) {};
\draw [line width=1] (j1)--(j2);

\node[inner sep=0] at (4 *2 *18/26, 1.3) (k1) {};
\node[inner sep=0] at (6 *2 *18/26, 1.3) (k2) {};
\draw [line width=1] (k1)--(k2);

\node[inner sep=0] at (7 *2 *18/26, 1.1) (l1) {};
\node[inner sep=0] at (9 *2 *18/26, 1.1) (l2) {};
\draw [line width=1] (l1)--(l2);

\node[inner sep=0] at (12 *2 *18/26, 1.3) (m1) {};
\node[inner sep=0] at (13 *2 *18/26, 1.3) (m2) {};
\draw [line width=1] (m1)--(m2);

\foreach \n in {1,...,6}{
      \node at (\n*4*18/26-0.5*18/26,0.4) (node1\n){};
      \node at (\n*4*18/26-0.5*18/26,-0.4) (node2\n){};
       \draw[line width=1, dashed] (node1\n)--(node2\n);
    }
    
\draw[-stealth, line width=1] (1 *18/26,-1)--(1 *18/26,-0.5);
\draw[-stealth, line width=1] (7 *18/26,-1)--(7 *18/26,-0.5);
\draw[-stealth, line width=1] (9 *18/26,-1)--(9 *18/26,-0.5);
\draw[-stealth, line width=1] (13 *18/26,-1)--(13 *18/26,-0.5);
\draw[-stealth, line width=1] (16 *18/26,-1)--(16 *18/26,-0.5);
\draw[-stealth, line width=1] (20 *18/26,-1)--(20 *18/26,-0.5);
\draw[-stealth, line width=1] (23 *18/26,-1)--(23 *18/26,-0.5);

\end{tikzpicture}
}

  \caption{Example of an instance of \textnormal{LWDPA} with the corresponding advice as in the proof of \cref{uplwdpaadv}. Bold lines denote requests that are included in the greediest optimal solution and arrows indicate the starting positions that are conveyed by the advice string. Note that rejecting the leftmost bold request and in turn accepting the three requests intersecting it would yield a different optimal solution which would however not be the greediest.}\label{figuplwdpaadv}
\end{figure}
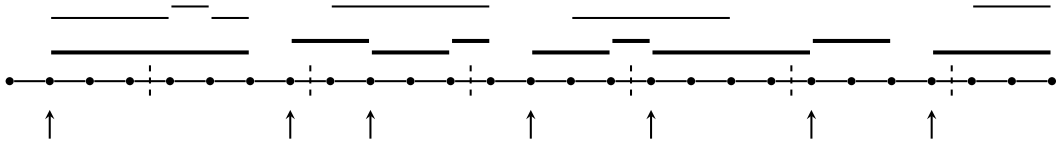

If $\textsc{OPT}_{\textnormal{Gr}}(I)$ does contain length-one requests, our quasi-greedy algorithm might a priori be unable to accept such requests $e_1, e_2,...,e_k \in \textsc{OPT}_{\textnormal{Gr}}(I) $ because it mistakenly accepted an earlier request $q$ blocking $e_1, e_2,...,e_k$. However, this is evidently impossible: Since $e_1, e_2,...,e_k$ have length $1$ and intersect $q$, they must all be contained in $q$. This, however, means that accepting $q$ and rejecting $e_1, e_2,...,e_k$ yields a solution that is at least as good as, but greedier than the fixed optimal solution, which contradicts the fact that the fixed optimal solution is the greediest one. Our quasi-greedy algorithm thus agrees with the fixed optimal solution on all requests having length at least $2$. Length-one requests can then be greedily accepted by the algorithm in the end, unless they are already blocked -- in which case they are not part of $\textsc{OPT}_{\textnormal{Gr}}(I)$.

Thus, we divide the underlying path $P$ of length $l$ into $\lceil{l/4}\rceil$ subpaths $p_1,\,p_2,\,...\,,\,p_{\lceil{l/4}\rceil}$ of length 4 each -- shortening the rightmost subpath if necessary (see \cref{figuplwdpaadv}). We encode all starting points belonging to $p_i$ (where we consider the last vertex of $p_i$ to belong to $p_{i+1}$) of requests contained in $\textsc{OPT}_{\textnormal{Gr}}(I)$ of length at least 2 using 3 bits. This is possible because there are $2^3=8$ configurations of such starting points in $p_i$: 4 possible positions of a single starting point, plus 3 possible configurations of 2 distinct starting points (considering that they must have a distance of at least 2), plus the possibility of $p_i$ not containing any such starting point. Concatenating these 3-bit strings results in an advice string of length $3\lceil{l/4}\rceil$.
\end{proof}

With this upper bound on the amount of advice necessary to achieve optimality, we continue with a corresponding lower bound. However, instead of simply giving a lower bound for optimality, the following theorem bounds the advice complexity of priority algorithms for \textnormal{LWDPA} for a whole range of small approximation ratios from below. This is achieved by reducing the online problem of binary string guessing with known history (\textnormal{2-SGKH}) to the priority problem \textnormal{LWDPA}.

The binary string guessing problem was introduced by Böckenhauer et al.~\cite{bockenhauer2014string} with the aim of deriving such lower bounds for online algorithms. Importantly, in our case we do not reduce \textnormal{2-SGKH} to an online problem, but rather to a priority problem, which will make the construction slightly more complicated.

\begin{definition}[\textnormal{2-SGKH}]
The input of \textnormal{2-SGKH} is a parameter $n\in \N$, followed by binary values $d_1,\, d_2,\, ...\,,\, d_n \in \{0,1\}$ which are revealed one-by-one (in an online fashion). Before reading $d_i,\; i\in \{1,2,...\,,n\}$, an online algorithm solving \textnormal{2-SGKH} outputs $y_i\, \in \{0,1\}$. The gain of the solution produced by the algorithm is the number of $i$ in $\{1,\,2,\,...\,,\,n\}$ with $y_i=d_i$.
\end{definition}

In other words, an online algorithm solving \textnormal{2-SGKH} tries to guess $n$ binary values $d_1,\, d_2,\, ...\,,\, d_n$. Importantly, the algorithm knows the number $n$ of values to guess before making the first guess and is informed whether it guessed correctly after each guess. Böckenhauer et al.~\cite{bockenhauer2014string} proved the following theorem:

\begin{theorem}[{{Böckenhauer et al.~\cite[Corollary~1]{bockenhauer2014string}}}]\label{2sgkh}
Let $1/2 \leq \varepsilon <1$. An online algorithm for \textnormal{2-SGKH} guessing more than $\varepsilon{}n$ out of a total of $n$ bits correctly needs to read at least $(1-\mathcal{H}(\varepsilon))n$ advice bits, where $\mathcal{H}(x)=-x\log(x) - (1-x)\log(1-x)$ is the binary entropy function.
\end{theorem}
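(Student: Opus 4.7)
I would prove this theorem by a standard information-theoretic counting argument. The key structural observation is that a fixed deterministic advice-free algorithm \textsc{A} for \textnormal{2-SGKH} induces, for every input $d=d_1\cdots d_n \in \{0,1\}^n$, a unique sequence of guesses $y(d)=y_1(d)\cdots y_n(d)$, since $y_i(d)$ depends only on $d_1,\ldots,d_{i-1}$. Defining the score pattern $s(d) \in \{0,1\}^n$ by $s_i(d):=1$ iff $y_i(d)=d_i$, the map $d \mapsto s(d)$ is in fact a bijection on $\{0,1\}^n$: given $s$, one reconstructs $d$ inductively by simulating \textsc{A} on the previously reconstructed prefix to obtain $y_i$, and then setting $d_i = y_i$ when $s_i=1$ and $d_i = 1-y_i$ otherwise.

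With this bijection in hand, the rest is pigeonholing and a standard binomial-tail bound. An algorithm reading $b$ advice bits is a family of at most $2^b$ deterministic advice-free algorithms indexed by the advice string. If more than $\varepsilon n$ correct guesses are achievable on every input of length $n$, then by pigeonhole some single advice $\Phi^*$ achieves this feat on at least $2^{n-b}$ distinct inputs. By the bijection from the first step, the number of inputs on which a fixed deterministic algorithm scores more than $\varepsilon n$ correct guesses equals the number of score patterns of Hamming weight strictly greater than $\varepsilon n$, namely $\sum_{k>\varepsilon n}\binom{n}{k}$. For $\varepsilon \ge 1/2$, the symmetry $\binom{n}{k}=\binom{n}{n-k}$ combined with the textbook entropy bound $\sum_{k\le(1-\varepsilon)n}\binom{n}{k}\le 2^{\mathcal{H}(\varepsilon)n}$ shows that this count is at most $2^{\mathcal{H}(\varepsilon)n}$. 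Combining the two inequalities yields $2^{n-b}\le 2^{\mathcal{H}(\varepsilon)n}$, i.e., $b\ge(1-\mathcal{H}(\varepsilon))n$, which is the claimed bound.

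The step I consider the heart of the argument is the bijection in the first paragraph: it looks almost trivial but fundamentally relies on the \emph{known history} feature of \textnormal{2-SGKH} — were the algorithm not informed of each revealed bit before guessing the next one, one could not simulate \textsc{A} forward to recover $d$ from $s$ alone, and the counting part of the argument would collapse. Once that bijection is in place, the pigeonhole reduction from advice to advice-free algorithms and the entropy tail bound on binomial coefficients are both classical and essentially routine.
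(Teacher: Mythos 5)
The paper does not prove this statement at all: it is imported verbatim from Böckenhauer et al.\ (cited as Corollary~1 of the string guessing paper), so there is no in-paper proof to compare against. Your argument is correct and is essentially the standard one underlying that reference: the bijection $d \mapsto s(d)$ (which, as you rightly stress, hinges on the known-history feature so that the algorithm can be simulated forward), the pigeonhole over the at most $2^b$ deterministic algorithms indexed by advice strings, and the binomial-tail bound $\sum_{k>\varepsilon n}\binom{n}{k}=\sum_{j<(1-\varepsilon)n}\binom{n}{j}\le 2^{\mathcal{H}(\varepsilon)n}$ for $\varepsilon\ge 1/2$ fit together exactly as you describe, yielding $b\ge(1-\mathcal{H}(\varepsilon))n$.
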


This theorem allows us to prove the following lower bound by reduction. A similar, more general construction was used by Boyar et al.~\cite{boyar2019advice} and applied to a wide range of optimization problems. 

\begin{theorem}\label{lowlwdpaadv}
Let $1/2 \leq \varepsilon < 1$. The advice complexity of any adaptive-priority algorithm for \textnormal{LWDPA} with an approximation ratio smaller than $3/(2+\varepsilon)$ is at least $(1-\mathcal{H}(\varepsilon))\,\lfloor l/3 \rfloor $.
\end{theorem}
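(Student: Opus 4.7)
The plan is to reduce the binary string guessing problem \textnormal{2-SGKH} to \textnormal{LWDPA} and then apply \cref{2sgkh} to obtain the advice lower bound, following the general scheme of Boyar et al.~\cite{boyar2019advice} for priority-with-advice reductions. Set $n := \lfloor l/3 \rfloor$ and partition the path of length $l$ into $n$ consecutive gadgets of $3$ edges each, with the idea that the $i$-th gadget will encode the $i$-th bit $d_i$ of a hard \textnormal{2-SGKH} instance. In each gadget, the adversarial choice of $d_i$ dictates which requests are present, and the priority algorithm's behaviour on the gadget will be interpreted as a binary ``guess'' $y_i$ for $d_i$: a correct guess allows the algorithm to match \textsc{OPT} with a gain of $3$ on the gadget, whereas a wrong guess caps its gain at $2$.

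Concretely, I would take the $i$-th gadget on vertices $v_0^i, v_1^i, v_2^i, v_3^i$ and, depending on $d_i$, include either the length-$2$ path $[v_0^i, v_2^i]$ together with the length-$1$ path $[v_2^i, v_3^i]$ if $d_i = 0$, or the length-$2$ path $[v_1^i, v_3^i]$ together with the length-$1$ path $[v_0^i, v_1^i]$ if $d_i = 1$. In either case \textsc{OPT} gains $3$ by accepting both compatible paths. Supplementing each gadget with suitable bit-independent ``distractor'' requests—for instance the two length-$2$ paths and/or the internal length-$1$ path—makes any attempt by the algorithm to put a single bit-dependent request at highest priority (in order to probe $d_i$) itself commit it to a concrete guess, so that no probing schedule can reliably reach gain $3$ without actually knowing $d_i$. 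The adversary then chooses each $d_i$ in response to the algorithm's (adaptive) priority order and prior decisions in such a way that the effective guess $y_i$ induced by the algorithm's first non-trivial decision on gadget~$i$ is wrong whenever possible.

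With this in place, the accounting is short: if $k$ out of $n$ guesses are correct, the algorithm's total gain is $3k + 2(n-k) = 2n + k$, while \textsc{OPT} has gain $3n$, so the strict approximation ratio equals $3n/(2n+k)$. Requiring it to be strictly smaller than $3/(2+\varepsilon)$ forces $k > \varepsilon n$. A \textnormal{2-SGKH} algorithm that reads the same advice string as the given \textnormal{LWDPA} algorithm, simulates it on the adversarial instance described above, and outputs $y_i$ according to the simulated behaviour therefore makes more than $\varepsilon n$ correct guesses on $n$ bits. By \cref{2sgkh}, this requires at least $(1 - \mathcal{H}(\varepsilon))\,n = (1 - \mathcal{H}(\varepsilon))\,\lfloor l/3 \rfloor$ advice bits, which is the desired lower bound.

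The main difficulty will be reconciling the adaptivity of the priority algorithm with the rigid temporal order of \textnormal{2-SGKH}: a \textnormal{2-SGKH} algorithm must commit its guesses in the fixed order $y_1, y_2, \dots, y_n$, but the simulated \textnormal{LWDPA} algorithm may process gadgets in any order dictated by its priority and may interleave requests from distinct gadgets. To handle this, I plan to design the gadgets so that the effective guess $y_i$ is determined by the algorithm's decision on the \emph{first} gadget-$i$ request it processes—with the adversary free to postpone choosing $d_i$ until that moment—and then to have the simulating \textnormal{2-SGKH} algorithm maintain a deferred-commitment bookkeeping that converts the out-of-order stream of effective guesses into the in-order guess sequence required by the \textnormal{2-SGKH} protocol. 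Getting the gadget geometry and the distractor requests to rule out all beneficial probing strategies, in the spirit of Boyar et al., is the key technical step.
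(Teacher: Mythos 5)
Your high-level plan -- reducing \textnormal{2-SGKH} to \textnormal{LWDPA} via $\lfloor l/3\rfloor$ gadgets of three edges each, with the accounting ``gain $3$ per correct guess, at most $2$ per wrong guess,'' yielding $3n/(2n+k)$ and hence $k>\varepsilon n$ -- is exactly the paper's strategy, and that part of your argument is sound. The gap is in the gadget itself. You make the \emph{set of requests present in the instance} depend on $d_i$ (one complementary pair for $d_i=0$, the other for $d_i=1$). In the priority model this leaks the bit before the algorithm has to commit: the algorithm can rank, say, $[v_2^i,v_3^i]$ above all other gadget-$i$ requests, and then the identity of the \emph{first} gadget-$i$ request it receives already determines whether $[v_2^i,v_3^i]$ is in the instance, hence determines $d_i$; it then scores $3$ on every gadget with no advice. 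Your proposed distractors do not close this leak -- adding both length-$2$ paths to every instance still leaves the bit readable off the presence or absence of the outer length-$1$ paths, while adding \emph{all} five subrequests makes the instance bit-independent and trivially solvable (accept the three unit edges greedily). Your remark that the adversary ``postpones choosing $d_i$ until the first gadget-$i$ request is processed'' is circular here, because with bit-dependent request sets the identity of that first request is itself a function of $d_i$.

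The missing idea, which is how the paper resolves this, is to make the first-presented gadget request bit-\emph{independent} and encode the bit only in what is revealed \emph{after} the algorithm's decision. Concretely, the candidate set is all four requests $r_{i,1},\dots,r_{i,4}$ (two complementary pairs); the reduction always presents $m$, the highest-priority candidate of the gadget under the current (adaptive) order, interprets accept/reject as the guess, and only then -- having learned $d$ -- appends either $m$'s complement (so accepting was right) or the other complementary pair (so rejecting was right, since $m$ blocks one of them). A queue of still-pending requests ensures the presentation always respects the adaptive priority order. This also dissolves your ordering worry in a cleaner way than ``deferred-commitment bookkeeping'': the paper assigns bit $k$ to whichever gadget the algorithm opens $k$-th, rather than fixing gadget $i$ to bit $i$. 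A fixed assignment genuinely deadlocks, because to continue the simulation past an early-opened gadget $i$ you need $d_i$, which the \textnormal{2-SGKH} protocol will not reveal until you have output $y_1,\dots,y_i$ in order.
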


\begin{proof}
Let \textsc{ALG} be an adaptive-priority algorithm for \textnormal{LWDPA} reading fewer than \((1-\mathcal{H}(\varepsilon))\, \lfloor l/3 \rfloor \) advice bits. Using \textsc{ALG} as a subroutine, we construct the online algorithm \textsc{GUESS} for \textnormal{2-SGKH}.

First, \textsc{GUESS} receives as its input the length $n\in \N$ of the binary string it has to guess. Next, it inputs the path $P=0,\,1,\,2,\,...\,,\,3n$ of length $l=3n$ to \textsc{ALG}, signifying that \textsc{ALG} will operate on an instance defined on $P$. Note that, strictly speaking, as discussed in the introduction, \textsc{GUESS} does not actually feed $P$ to \textsc{ALG}, but conveys it via the superscript of the requests we will define below.

\begin{figure}[t]
\centering
\resizebox{\textwidth}{!}{
\medskip
\begin{tikzpicture}

\node[] at (0,0) (node0) {};

\node[circle,fill=black, inner sep=0, minimum size=4] at (4,0) (node4) {};

\draw[dashed, line width=1] (node0)--(node4);

\foreach \n in {5,...,15}{
		\pgfmathtruncatemacro{\nminusone}{\n - 1}
       \node[circle,fill=black, inner sep=0, minimum size=4] at (\n,0) (node\n) {};
       \draw[line width=1] (node\nminusone)--(node\n);
    }

\node[] at (19,0) (node19) {};
\draw[dashed, line width=1] (node15)--(node19);

\node [circle, inner sep=0, minimum size=3] at (8,-0.7) (p11) {};
\node [circle, inner sep=0, minimum size=3] at (9, -0.7) (p12) {};
\draw [line width=1] (p11)--(p12);

\node[circle] at (7.5,-0.7) (label1) {$r_{i,1}$};

\node [circle, inner sep=0, minimum size=3] at (8,-1.1) (p21) {};
\node [circle, inner sep=0, minimum size=3] at (10,-1.1) (p22) {};
\draw [line width=1] (p21)--(p22);

\node[circle] at (7.5, -1.1) (label2) {$r_{i,2}$};

\node [circle, inner sep=0, minimum size=3] at (9, -1.5) (p31) {};
\node [circle, inner sep=0, minimum size=3] at (11, -1.5) (p32) {};
\draw [line width=1] (p31)--(p32);

\node[circle] at (8.5,-1.5) (label3) {$r_{i,3}$};

\node [circle, inner sep=0, minimum size=3] at (10, -1.9) (p41) {};
\node [circle, inner sep=0, minimum size=3] at (11, -1.9) (p42) {};
\draw [line width=1] (p41)--(p42);

\node[circle] at (9.5,-1.9) (label4) {$r_{i,4}$};

\draw [line width=1, decorate,
    decoration = {brace}] (5.05, 0.5) --  (7.95, 0.5);
\draw [line width=1, decorate,
    decoration = {brace}] (8.05, 0.5) --  (10.95, 0.5);
\draw [line width=1, decorate,
    decoration = {brace}] (11.05, 0.5) --  (13.95, 0.5);

\node[circle] at (6.5,0.9) (label4) {$p_{i-1}$};
\node[circle] at (9.5,0.9) (label4) {$p_i$};
\node[circle] at (12.5,0.9) (label4) {$p_{i+1}$};

\end{tikzpicture}
}
\caption{The four requests $r_{i,j}$, $1\leq j\leq 4$, defined on 
the subpath $p_i$. Note that these requests constitute two pairs of 
complementary paths in $p_i$.
\vspace{3.75ex}}\label{figlowlwdpaadv}
\end{figure}
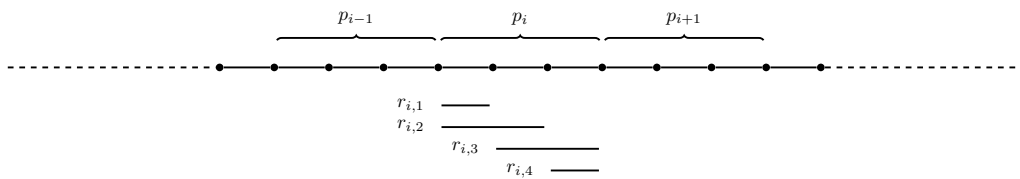

Divide the path $P$ into $n$ subpaths $p_1,\,p_2,\,...,\,p_n$ of length 3 each. Thus, $p_i$ is the path from vertex $3i-3$ to vertex $3i$. For each $1\leq i \leq n$, we define the following four requests, all of them subpaths of $p_i$ (see \cref{figlowlwdpaadv}):
\[r_{i,1}:=[3i-3,\,3i-2]^P, \; \; r_{i,2}:=[3i-3,\,3i-1]^P, \;\; r_{i,3}:=[3i-2,\,3i]^P, \;\; r_{i,4}:=[3i-1,\,3i]^P\]
Define $R_i= \{r_{i,j}\;|\;1\leq j\leq 4\}$ and, for each $r_{i,j}\in R_i$, denote by $r_{i,j}^c$ its complement in $p_i$, i.e., $r_{i,1}^c=r_{i,3}$, $r_{i,2}^c=r_{i,4}$, and vice versa. The idea is for \textsc{GUESS} to translate guessing each of the $n$ binary values into making \textsc{ALG} decide whether to accept or reject a certain request $r_{i,j}$ in each $p_i$. The difficulty lies in the fact that \textsc{GUESS} may not feed requests to \textsc{ALG} in an arbitrary way, but that it must respect the given adaptive-priority order. To this end, \textsc{GUESS} keeps track of two sets of requests, $U$ and $Q$, while processing its own input. Initially, $Q$ is empty and $U=\{r_{i,j}\;|\;1\leq i\leq n,\;1\leq j\leq 4\}$ contains all possible requests. \textsc{GUESS} then feeds the current highest-priority element in $U$ to \textsc{ALG} and guesses 1 if and only if \textsc{ALG} accepts that element. After having learned whether it should have guessed 1 or 0, \textsc{GUESS} can accordingly decide whether accepting or rejecting the presented element would have been optimal for \textsc{ALG}. It does this by either feeding the complement in $p_i$ of the presented path to \textsc{ALG}, in which case accepting (both requests) would have been optimal, or -- if rejecting would have been optimal -- by presenting the other two requests in $R_i$. However, \textsc{GUESS} may not feed these remaining requests to \textsc{ALG} instantly, but only after the adaptive-priority order permits so, which is why \textsc{GUESS} keeps track of the additional set $Q$ containing still-to-be-presented requests. Before each new guess, \textsc{GUESS} inputs high-priority requests remaining in $Q$ from previous guesses to \textsc{ALG}. More formally, denoting by $\max{(U\cup Q)}$ the element in $U\cup Q$ which has highest priority with respect to the current priority order, \textsc{GUESS} works as in Algorithm \ref{guessalgo}.

\begin{algorithm}

\caption{\textsc{GUESS}}\label{guessalgo}
\medskip

\textbf{Online Input: } The sequence $n,\,d_1,\, d_2,\, ...,\, d_n$, where $n\in \N$ and $d_i\in \{0,1\}$ for $i\in \{1,\,2,\,...\,,\,n\}$.

\medskip

\textbf{Advice: } The advice tape $\Phi$.

\medskip

\textbf{Online Output: } The sequence $y_1,\, y_2,\, ...,\, y_n$, where  $y_i\in \{0,1\}$ for $i\in \{1,\,2,\,...\,,\,n\}$.

\medskip

\newlength{\characterlength}
\settowidth{\characterlength}{(}

\textbf{Algorithm: }

\medskip

	\begin{algorithmic}
		\State Read the length $n\in \N$ of the binary string to guess.
		\State \hspace{-\characterlength}(Input the path $P=0,\,1,\,2,\,...\,,\,3n$ to \textsc{ALG}.) \Comment{Only informally.}
		\State $Q\gets \emptyset$
		\State $U\gets \{r_{i,j}\;|\;1\leq i\leq n,\;1\leq j\leq 4\}$
		\For {$1\leq k\leq n$}
			\While {$\max{(U\cup Q)} \in Q$}
				\State $m \gets \max{(U\cup Q)}$
				\State Feed $m$ to \textsc{ALG}
				\State $Q \gets Q\setminus \{m\}$
			\EndWhile
			\State $m_k\gets \max{(U\cup Q)}$ \Comment{$m_k\in U$.}
			\State Find $1 \leq i\leq n$ such that $m_k\in R_i$
       	\State $U \gets U \setminus R_i$
       	\State Feed $m_k$ to \textsc{ALG}, if \textsc{ALG} accepts, guess $y_k:=1$, else guess $y_k:=0$.
       	\State Read the true value $d_k$.
       	\If {$d_k=1$}
				\State $Q\gets Q\cup \{m_k^c\}$
			\EndIf
			\If {$d_k=0$}
				\State $Q\gets Q\cup (R_i\setminus\{m_k,\,m_k^c\})$
			\EndIf
		\EndFor
		\While {$Q \neq \emptyset$} \Comment{Post-processing}
			\State $m \gets \max Q$
			\State Feed $m$ to \textsc{ALG}
			\State $Q \gets Q\setminus \{m\}$
		\EndWhile
	\end{algorithmic}
\end{algorithm}

Since \textsc{GUESS} always feeds that request to \textsc{ALG} which currently has highest priority among all requests that could possibly still be presented, \textsc{GUESS} is well-defined. 

\textsc{GUESS} wrongly guesses the $k$'th bit $d_k$ if and only if \textsc{ALG} makes the wrong decision (with respect to the optimal solution) on whether to accept the current highest-priority element $m_k$. In this case, at most two out of three edges of $p_i$ can be accepted by \textsc{ALG}, whereas the optimal solution contains all three. This must happen for at least $(1-\varepsilon)n$ many $k$'s for some instance $\tilde{I}$ of \textnormal{2-SGKH}, $\abs{\tilde{I}}=n$, since otherwise, \textsc{GUESS} would consistently guess more than $\varepsilon n$ bits correctly, contradicting the fact that it uses fewer than \((1-\mathcal{H}(\varepsilon))\, \lfloor l/3 \rfloor=(1-\mathcal{H}(\varepsilon))n\) advice bits. Thus, denoting by $I$ the instance \textsc{GUESS} produces for \textsc{ALG} when processing $\tilde{I}$, we have
\begin{equation*} 
\frac{|\textsc{OPT}(I)|}{| \textsc{ALG}(I)|}\geq\frac{3n}{2(1-\varepsilon ) n + 3 \varepsilon n}=\frac{3}{2+\varepsilon}.
\end{equation*}
Note that this proves the theorem only for values of $l$ that are multiples of $3$. However, in order to prove the theorem for general $l$, we may simply adapt \textsc{GUESS} to use paths of length $l=3n+1$ or $l=3n+2$, defining all the necessary requests on the first $3n=3 \lfloor l/3 \rfloor$ edges.
\end{proof}

Letting $\varepsilon \to 1$ yields a lower bound of $\lfloor l/3 \rfloor$ on the number of advice bits necessary to achieve optimality, as compared to the $3\lceil{\frac{l}{4}}\rceil$ upper bound derived in \cref{uplwdpaadv}.

\section{DPA on Trees}\label{sec:CAT}

Returning to the original \textnormal{DPA} problem, we now consider the natural generalization of \textnormal{DPA} on paths, namely \textnormal{DPA} on the graph class $\mathcal{T}$ of trees (also referred to as the CAT problem).

As in the previous sections, we identify the request $[x,\,y]^T,\;T\in \mathcal{T}$, with the unique $x$-$y$-path in $T$. Since $T$ is always clear from the context, we will omit it as the superscript of $[x,\,y]$. As is common, we denote by $\deg(x)$ the degree of the vertex $x$ and by $d(x,\,y)$ the distance between the vertices $x$ and $y$, i.e., the length of the unique $x,y$-path in $T$.

\begin{theorem}\label{upcat}
There is a strictly 2-competitive fixed-priority algorithm without advice for \textnormal{DPA} on trees.
\end{theorem}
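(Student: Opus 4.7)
The plan is to generalize \cref{thm:optdpa}'s greedy-by-rightmost-endpoint idea, replacing each request's right endpoint with its \emph{apex}. For each $T \in \mathcal{T}$, I would root $T$ at an arbitrary vertex and, for any request $[x,y]$ in $T$, set its apex $v_{[x,y]} := \mathrm{lca}(x,y)$; I then set $[x,y] \succ [x',y']$ whenever $v_{[x,y]}$ lies strictly deeper than $v_{[x',y']}$, extending to a total order on $\mathcal{U}$ as in \cref{thm:optdpa}. With this priority, \textsc{GREEDY} accepts each request not yet blocked by a previously accepted one. The aim is the bound $|\textsc{OPT}(I)| \le 2|\textsc{GREEDY}(I)|$, proved by induction on $|I|$ following the bookkeeping of \cref{thm:optdpa}.

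For the inductive step, I would let $p := \min_{\prec}\textsc{GREEDY}(I)$ be the last request \textsc{GREEDY} accepts and form $I'$ by removing $p$ together with all requests $q \prec p$ that intersect $p$. The identity $|\textsc{GREEDY}(I)| = |\textsc{GREEDY}(I')| + 1$ then transfers verbatim from the path case: \textsc{GREEDY}'s decisions on requests preceding $p$ coincide in $I$ and $I'$, and every removed low-priority request was already rejected in $I$ by some earlier accepted request that also appears in $I'$, so its rejection survives in $I'$.

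The key new step is the bound $|\textsc{OPT}(I)| \le |\textsc{OPT}(I')| + 2$. Writing $v := v_p$, let $e_1, e_2$ denote the (at most two) edges of $p$ incident to $v$. I claim every request in $I \setminus I'$ uses $e_1$ or $e_2$. Indeed, $p$ itself uses both. For any other $q \in I \setminus I'$ we have $q \prec p$ and $q$ shares some edge with $p$; since $p$ lies entirely in the subtree rooted at $v$ and $\mathrm{depth}(v_q) \le \mathrm{depth}(v)$, the apex $v_q$ must be an ancestor of $v$, as otherwise the subtrees rooted at $v$ and at $v_q$ would be disjoint and $q$'s path could not reach $p$. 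If $v_q$ is a strict ancestor of $v$, then $q$ enters $v$'s subtree only through $v$, using exactly one child edge of $v$, which must belong to $\{e_1, e_2\}$ to allow sharing any edge with $p$; if $v_q = v$, then $q$ uses at least one child edge of $v$ and the analogous argument again forces a used child edge to lie in $\{e_1, e_2\}$. Since $\textsc{OPT}(I)$ is edge-disjoint, at most one of its requests uses $e_1$ and at most one uses $e_2$, so $|\textsc{OPT}(I) \cap (I \setminus I')| \le 2$ and hence $|\textsc{OPT}(I')| \ge |\textsc{OPT}(I)| - 2$.

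Combining both counts with the induction hypothesis yields
\[ \frac{|\textsc{OPT}(I)|}{|\textsc{GREEDY}(I)|} \le \frac{|\textsc{OPT}(I')| + 2}{|\textsc{GREEDY}(I')| + 1} \le \frac{2|\textsc{GREEDY}(I')| + 2}{|\textsc{GREEDY}(I')| + 1} = 2, \]
completing the induction. The main technical obstacle is the case analysis pinning every lower-priority intersecting $q$ to an apex-edge of $p$: one has to rule out any way for $q$ to sneak into $p$'s path through a deep edge while bypassing $v$, which is precisely where the depth inequality $\mathrm{depth}(v_q) \le \mathrm{depth}(v)$ does the work. The degenerate case where $v$ is an endpoint of $p$—so only one apex-edge exists—only strengthens the conclusion.
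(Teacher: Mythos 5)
Your proposal is correct and is essentially the paper's own proof: your apex $\mathrm{lca}(x,y)$ is exactly the paper's ``peak'' (the vertex of the path closest to the root), and the deeper-apex-first priority order, the greedy acceptance rule, and the induction bounding $|\textsc{OPT}(I)|\le|\textsc{OPT}(I')|+2$ via the at most two apex-incident edges of $p$ all coincide with the paper's argument. The only cosmetic differences are that the paper roots the tree at a leaf and splits into explicit cases according to whether the peak is an endpoint of $p$, both of which your uniform formulation absorbs (aside from a harmless wording slip where you say ``removed'' low-priority requests when you mean the remaining ones when arguing $|\textsc{GREEDY}(I)|=|\textsc{GREEDY}(I')|+1$).
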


\begin{proof}
We again use the algorithm \textsc{GREEDY} that accepts every unblocked request. Here, \textsc{GREEDY} uses a priority order that orders paths primarily by non-increasing distance to the root of the tree $T$ on which the given instance is defined. More formally, choose, for each $T \in \mathcal{T}$, some arbitrary leaf $w$ as its root and define, for each path $p$ in $T$, its \emph{peak} to be the vertex $s_p:=\argmin_{v\in p}d(v,\,w)$, i.e., $p$'s vertex closest to the root.
\begin{claim}\label{upcatlem1}
The peak $s_p$ is unique.
\end{claim}

\begin{proof}
This follows quite easily from the fact that paths in trees are unique, i.e., that they do not contain cycles. Let $s_p,\,s_p'\in p$ be such that $d(s_p,\,w)=d(s_p',\,w)=\min _{v\in p}d(v,\,w)$. The $s_p$-$s_p'$-subpath of $p$ must be internally disjoint from the $s_p'$-$w$-path since if there was an internal intersection vertex $v$, the $v$-$w$-subpath of the $s_p'$-$w$-path would be a proper subpath and thus shorter. Thus, concatenating the $s_p$-$s_p'$-path and the $s_p'$-$w$-path yields again a path, which must be the original $s_p$-$w$-path since paths in trees are unique. Therefore, the length of the $s_p$-$w$-path equals the sum of the lengths of the $s_p$-$s_p'$-path and the $s_p'$-$w$-path, which shows that the length of the $s_p$-$s_p'$-path must be $0$.
\end{proof}

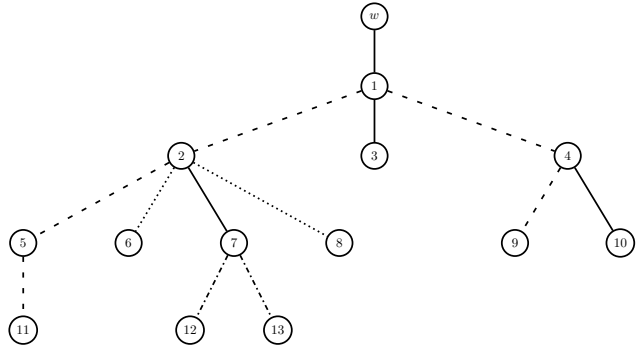
\begin{figure}[t]
\centering
\resizebox{.6\textwidth}{!}{
\begin{tikzpicture}[
  every node/.style = {shape=circle, solid, line width=1,
    draw, minimum size=2em , scale=0.75},
  level 1/.style={sibling distance=1em, level distance=4em},
  level 2/.style={sibling distance=11em, level distance=4em},
  level 3/.style={sibling distance=6em, level distance=5em},
  level 4/.style={sibling distance=5em, level distance=5em}
	]
	\node{$w$}
		child{
  			node[color=black] {1}
  			edge from parent [solid, line width=1]
    		child { 
     			node [color=black] {2}
     			edge from parent [loosely dashed, line width=1, color= black]
				child {
					node [color=black] {5} 
					edge from parent [loosely dashed, line width=1, color= black]
					child {
						node [color=black]{11} 
						edge from parent [loosely dashed, line width=1, color= black]}}
      			child {
      				node [color=black]{6}
    				edge from parent [dotted, line width=1, color= black]}
     		 	child {
      				node [color=black] {7}
     				edge from parent [solid, line width=1, color=black]
      				child{
     					node [color=black]{12}
      					edge from parent [dashdotted, line width=1, color= black]}
    				child{
      					node [color=black] {13}
      					edge from parent [dashdotted, line width=1, color= black]}}
      			child {
     				node [color=black]{8}
      				edge from parent [dotted, line width=1, color= black]}}
    		child {
     			node [color=black] {3}
    			edge from parent [solid, line width=1]}
     		child {
     			node [color=black] {4}
    			edge from parent [loosely dashed, line width=1, color= black]
				child {
					node [color=black] {9}
					edge from parent [loosely dashed, line width=1, color= black]}
      			child {
     				node [color=black]{10}
      				edge from parent [solid, line width=1, color=black]}}};
       
\end{tikzpicture}
}
\caption{Example of a tree and an instance for the proof of \cref{upcat}. The instance consists of the path $[12,\,13]$ with peak $s_{[12,\,13]}=7$, the path $[6,\,8]$ with peak $s_{[6,\,8]}=2$ and the path $[11,\,9]$ with peak $s_{[11,\,9]}=1$. With the priority order defined below, we have that $[12,\,13] \succ [6,\,8] \succ [11,\,9]$.}
\label{figupcat}
\end{figure}
We choose a fixed-priority order $\prec$ such that, for each $T \in \mathcal{T}$ and all requests $p$ and $p'$ defined on $T$,
\[d(s_p,\,w)>d(s_{p'},\,w)\implies p\succ p'.\]
See \cref{figupcat} for an example. With this priority order, we procede by induction on the size of an instance to show that \textsc{GREEDY} is strictly 2-competitive.

Since the claim is trivial for instances consisting of only one request, let $I$ be an instance of size at least two. As in the proof of \cref{thm:optdpa}, define $p:=\min_{\prec} \textsc{GREEDY}(I)$ and let $I'$ be the instance consisting of the same requests as $I$, except that $p$ and all requests with lower priority that intersect it are omitted. We again have that $|\textsc{GREEDY}(I)|=|\textsc{GREEDY}(I')|+1$. We now distinguish two cases:

\begin{description}
\item[\textit{Case~1.}] Consider first the case where the vertex $s_p$ is an end-vertex of the path $p$. Let $\{x,\,s_p\}$ be the edge of $p$ incident to $s_p$ and consider a path $q\in I \setminus I'$.
\begin{claim}\label{upcatlem2}
The path $q$ contains the edge $\{x,\,s_p\}$.
\end{claim}
\begin{proof}
By definition of $I'$, $q$ has lower priority than $p$ and intersects it on some edge. Thus, $p$ and $q$ share at least two common vertices. Let $v\neq s_p$ be such a vertex, as depicted in \cref{figupcatlem2}. As in \cref{upcatlem1}, the (non-trivial) $v$-$s_p$-subpath of $p$ must be internally disjoint from the $s_p$-$w$-path and the $v$-$s_q$-subpath of $q$ must be internally disjoint from the $s_q$-$w$-path -- by definition of the peaks $s_p$ and $s_q$. Thus, since paths in trees are unique, the concatenation of the $v$-$s_p$-path with the $s_p$-$w$-path coincides with the concatenation of the $v$-$s_q$-path with the $s_q$-$w$-path. Since $q$ has lower priority than $p$, the length of the $s_q$-$w$-path is at most the length of the $s_p$-$w$-path, which yields that the $v$-$s_p$-path, and thus in particular the edge $\{x,\,s_p\}$, is contained in the $v$-$s_q$-path and thus in $q$.
\end{proof}
The claim shows that at most one of the requests in $I\setminus I'$ can be part of $\textsc{OPT}(I)$ since if there were multiple, they would intersect on the edge $\{x,\,s_p\}$. Thus, $|\textsc{OPT}(I)|\leq|\textsc{OPT}(I')|+1$.
\item[\textit{Case~2.}] Consider now the case where $s_p$ is not an end-vertex of $p$. Let $\{x,\,s_p\}$ and $\{y,\,s_p\}$ be the two edges of $p$ incident to $s_p$. Split $p$ into two paths $p_x$ and $p_y$, both with end-vertex $s_p$, such that $\{x,\,s_p\}$ and $\{y,\,s_p\}$ are the end-edges of $p_x$ and $p_y$, respectively. A path $q\in I\setminus I'$ must intersect $p$ on some edge and thus either $p_x$ or $p_y$ or both. Furthermore, it trivially holds that $s_{p_x}=s_{p_y}=s_p$. The situation is thus the same as in the first case, and we get that $q$ must contain either the edge $\{x,\,s_p\}$ or the edge $\{y,\,s_p\}$ or both, depending on which of the two paths $q$ intersects. Thus, at most two requests in $I\setminus I'$ can be part of $\textsc{OPT}(I)$ since if there were three distinct such requests, two of them would intersect on $\{x,\,s_p\}$ or on $\{y,\,s_p\}$.
\end{description}
Hence, in both cases it holds that $|\textsc{OPT}(I)|\leq|\textsc{OPT}(I')|+2$. Applying the induction hypothesis to $I'$ yields the desired result:
\[\frac{\abs{\textsc{OPT}(I)}}{\abs{\textsc{GREEDY}(I)}}\leq \frac{\abs{\textsc{OPT}(I')}+2}{\abs{\textsc{GREEDY}(I')}+1}\leq \frac{2 \, \abs{\textsc{GREEDY}(I')}+2}{\abs{\textsc{GREEDY}(I')}+1}=2. \qedhere \]
\end{proof}

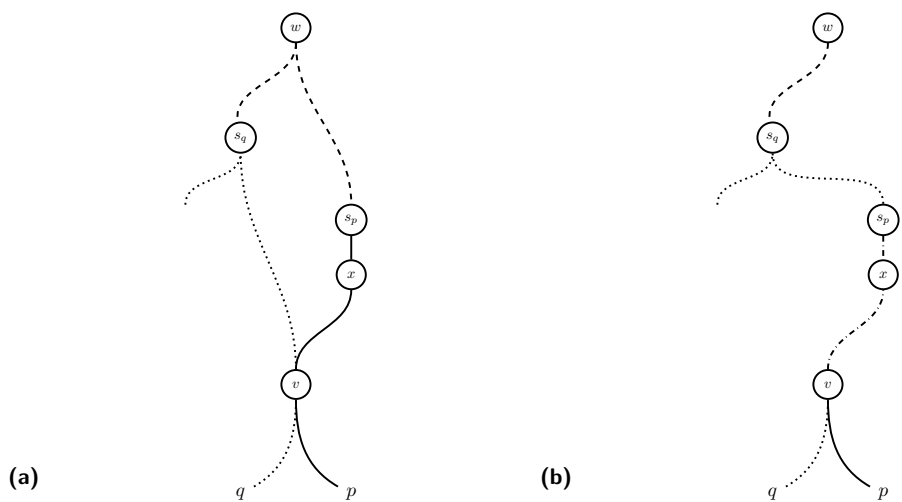
\begin{figure}[t]
\centering
\begin{subfigure}{.5\textwidth}
\centering

\resizebox{.4\textwidth}{!}{

\begin{tikzpicture}

	\node[shape=circle, solid, line width=1, draw, minimum size=2em, scale=0.75] at (0,4.5) (r) {$w$};
	\node[shape=circle, solid, line width=1, draw, minimum size=2em , scale=0.75] at (-1,2.5) (sp) {$s_q$};
	\draw[dashed, line width=1] (r) to[out=-90, in=100](sp);
	\node[shape=circle, solid, line width=1, draw, minimum size=2em , scale=0.75] at (1,1) (sq) {$s_p$};
	\draw[dashed, line width=1] (r) to[out=-90, in=90](sq);
	\node[shape=circle, solid, line width=1, draw, minimum size=2em , scale=0.75] at (1,0) (x) {$x$};
  	\draw[solid, line width=1] (sq)--(x);
	\node[shape=circle, solid, line width=1, draw, minimum size=2em , scale=0.75] at (0,-2) (v) {$v$};
  	\draw[solid, line width=1] (v) to[out=90, in=-90](x);
  	\draw[dotted, line width=1] (v) to[out=90, in=-90](sp);

	\node[shape=circle, solid, line width=1, minimum size=2em , scale=0.75] at (-2,1) (inv) {};
  	\draw[dotted, line width=1] (inv) to[out=90, in=-90](sp);

	\node[shape=circle, solid, line width=1, minimum size=2em , scale=0.75] at (-1,-4) (inv2) {};
  	\draw[dotted, line width=1] (inv2) to[out=30, in=-90] (v);
	\node[shape=circle, solid, line width=1, minimum size=2em , scale=0.75] at (1,-4) (inv3) {};
  	\draw[solid, line width=1] (inv3) to[out=150, in=-90] (v);

  	\node[shape=circle, solid, line width=1, minimum size=2em , scale=1] at (-1,-4) (inv4) {$q$};
  	\node[shape=circle, solid, line width=1, minimum size=2em , scale=1] at (1,-4) (inv4) {$p$};

\end{tikzpicture}
}
\vspace*{-5.5ex}
\caption{}
\label{figupcatlem2a}
\end{subfigure}%
\begin{subfigure}{.5\textwidth}
\centering

\resizebox{.4\textwidth}{!}{

\begin{tikzpicture}

	\node[shape=circle, solid, line width=1, draw, minimum size=2em , scale=0.75] at (0,4.5) (r) {$w$};
	\node[shape=circle, solid, line width=1, draw, minimum size=2em , scale=0.75] at (-1,2.5) (sp) {$s_q$};
	\draw[dashed, line width=1] (r) to[out=-90, in=100](sp);
	\node[shape=circle, solid, line width=1, draw, minimum size=2em , scale=0.75] at (1,1) (sq) {$s_p$};
	\draw[dotted, line width=1] (sp) to[out=-90, in=90](sq);
	\node[shape=circle, solid, line width=1, draw, minimum size=2em , scale=0.75] at (1,0) (x) {$x$};
  	\draw[dashdotted, line width=1] (sq)--(x);
	\node[shape=circle, solid, line width=1, draw, minimum size=2em , scale=0.75] at (0,-2) (v) {$v$};
  	\draw[dashdotted, line width=1] (v) to[out=90, in=-90](x);

	\node[shape=circle, solid, line width=1, minimum size=2em , scale=0.75] at (-2,1) (inv) {};
  	\draw[dotted, line width=1] (inv) to[out=90, in=-90](sp);

	\node[shape=circle, solid, line width=1, minimum size=2em , scale=0.75] at (-1,-4) (inv2) {};
  	\draw[dotted, line width=1] (inv2) to[out=30, in=-90] (v);
	\node[shape=circle, solid, line width=1, minimum size=2em , scale=0.75] at (1,-4) (inv3) {};
  	\draw[solid, line width=1] (inv3) to[out=150, in=-90] (v);
  	
  	\node[shape=circle, solid, line width=1, minimum size=2em , scale=1] at (-1,-4) (inv4) {$q$};
  	\node[shape=circle, solid, line width=1, minimum size=2em , scale=1] at (1,-4) (inv4) {$p$};

\end{tikzpicture}
}
\vspace*{-5.5ex}
\caption{}
\label{figupcatlem2b}
\end{subfigure}
\caption{Subfigure~(a) shows the a priori situation with regards to the paths $q$ (dotted) and $p$ (solid), intersecting on the vertex $v$. Since paths in trees are unique, the two $v$-$w$-paths in Subfigure~(a) must actually be the same path, and thus Subfigure~(b) applies. Since $q$ has lower priority than $p$, the vertex $s_q$ must lay on the $s_p$-$w$-path (as depicted in the figure) and not vice versa. Thus $q$ contains the edge $\{x,\,s_p\}$.}
\label{figupcatlem2}
\end{figure}

By slightly refining the priority order we defined above, we obtain the generalization of the result in \cref{thm:optdpa} mentioned earlier. This refinement consists in additionally requiring that $p \succ p'$ whenever $s_p=s_{p'}$ and $s_p$ is an end-vertex of $p$ but not of $p'$.

\begin{proposition}[Generalization of \cref{thm:optdpa}]\label{upcatcor}
There is an optimal fixed-priority algorithm without advice for \textnormal{DPA} on the graph class of trees with maximum degree at most $3$.
\end{proposition}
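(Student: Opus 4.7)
The plan is to run the same induction as in the proof of \cref{upcat} with the refined priority order, and to exploit the degree-at-most-3 constraint to tighten $|\textsc{OPT}(I)|\le|\textsc{OPT}(I')|+2$ to $|\textsc{OPT}(I)|\le|\textsc{OPT}(I')|+1$ in Case~2. Once that tightening is in place, the closing calculation becomes $|\textsc{OPT}(I)|/|\textsc{GREEDY}(I)|\le(|\textsc{GREEDY}(I')|+1)/(|\textsc{GREEDY}(I')|+1)=1$, giving optimality.

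Concretely, I would retain the definitions $p:=\min_{\prec}\textsc{GREEDY}(I)$ and $I'$ from the proof of \cref{upcat}, so that $|\textsc{GREEDY}(I)|=|\textsc{GREEDY}(I')|+1$ is automatic. Case~1 (where $s_p$ is an end-vertex of $p$) already yields $|\textsc{OPT}(I)|\le|\textsc{OPT}(I')|+1$ with no change, and the refined tie-breaking rule does not affect its argument, since it only activates when $s_p$ is \emph{not} an end-vertex of $p$.

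The heart of the proof is the reworked Case~2, where $s_p$ is internal to $p$ with incident edges $\{x,s_p\},\{y,s_p\}\in p$. Because $w$ is a leaf and $s_p$ is internal to $p$, we have $s_p\neq w$, hence $s_p$ also has an edge toward $w$; the hypothesis $\deg(s_p)\le 3$ then makes these three edges the \emph{only} ones incident to $s_p$. Suppose for contradiction that distinct $q_1,q_2\in(I\setminus I')\cap \textsc{OPT}(I)$ exist. By the argument already given in \cref{upcat}, each $q_i$ must contain $\{x,s_p\}$ or $\{y,s_p\}$, so edge-disjointness of $\textsc{OPT}(I)$ forces, after relabelling, $q_1\supseteq\{x,s_p\}$ and $q_2\supseteq\{y,s_p\}$ with neither path containing the other's edge. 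Now analyse each $q_i$ at $s_p$: either $q_i$ ends at $s_p$, in which case $s_{q_i}=s_p$ and $s_p$ is an end-vertex of $q_i$ but not of $p$, so the refinement gives $q_i\succ p$, contradicting $q_i\in I\setminus I'$; or $q_i$ continues past $s_p$, and since the only incident edge left for it to use is the one toward $w$, both $q_1$ and $q_2$ would then share that edge, again contradicting edge-disjointness of $\textsc{OPT}(I)$. Hence at most one of $q_1,q_2$ lies in $\textsc{OPT}(I)$, so $|\textsc{OPT}(I)|\le|\textsc{OPT}(I')|+1$, and the induction hypothesis closes the argument.

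I expect the main obstacle to be the bookkeeping at $s_p$: both the ``ends at $s_p$'' subcase (killed by the refinement) and the ``continues toward $w$'' subcase (killed by edge-disjointness together with $\deg(s_p)\le 3$) must be dispatched, and one must verify that a path through $s_p$ has no other continuation in a degree-$3$ tree. All the novelty is concentrated in this single sub-argument; the rest of the proof is structurally identical to that of \cref{upcat}.
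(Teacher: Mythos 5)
Your proposal is correct and follows essentially the same route as the paper: both proofs reuse the induction and the definitions of $p$ and $I'$ from \cref{upcat}, invoke the tie-breaking refinement to rule out a lower-priority request with peak $s_p$ ending at $s_p$, and then use $\deg(s_p)\le 3$ to force any two distinct requests in $I\setminus I'$ to share one of the three edges incident to $s_p$ (the paper phrases this as a pigeonhole on ``each contains exactly two of the three edges,'' while you identify the shared parent edge directly, which is an immaterial difference).
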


\begin{proof}
We follow the lines of the proof of \cref{upcat}. Define the request $p$ and the instance $I'$ accordingly. As seen above, if $p$'s peak $s_p$ is an end-vertex of $p$, we have $|\textsc{OPT}(I)|\leq|\textsc{OPT}(I')|+1$. Thus, consider again the case where $s_p$ is not an end-vertex of $p$, but now under the additional assumption that $\deg(s_p)\leq3$.

For paths $q,\,q'\in I\setminus I'$, we know that $q$ and $q'$ have lower priority than $p$ and therefore, $s_p$ is also not an end-vertex of $q$ and $q'$ if they have peak $s_p$, by the second condition on the priority order. Just as in the previous proof, both $q$ and $q'$ must contain at least one of the two edges $\{x,\,s_p\}$ and $\{y,\,s_p\}$ of $p$ incident to $s_p$. Since $\deg(s_p)\leq 3$, there is at most one additional edge $\{z,\,s_p\}$ incident to $s_p$, thus $q$ and $q'$, if they are distinct, must intersect on at least one of the edges $\{x,\,s_p\}$, $\{y,\,s_p\}$, $\{z,\,s_p\}$ since they both contain exactly two of them. Hence, at most one of the two paths can be part of $\textsc{OPT}(I)$.

We therefore get that $|\textsc{OPT}(I)|\leq|\textsc{OPT}(I')|+1$ in both the first and the second case, which yields
\[\frac{\abs{\textsc{OPT}(I)}}{\abs{\textsc{GREEDY}(I)}}\leq \frac{\abs{\textsc{OPT}(I')}+1}{\abs{\textsc{GREEDY}(I')}+1}\leq \frac{\abs{\textsc{GREEDY}(I')}+1}{\abs{\textsc{GREEDY}(I')}+1}=1. \qedhere \]
\end{proof}

Having established this optimality result for the subclass of $\mathcal{T}$ containing trees with maximum degree at most 3, it remains to prove a lower bound for the class of trees with maximum degree at least 4.

\begin{theorem}\label{lowcat}
The approximation ratio of any adaptive-priority algorithm without advice for \textnormal{DPA} on trees with maximum degree at least $4$ is at least $2$.
\end{theorem}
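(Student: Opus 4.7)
The plan is to exploit a vertex $v$ of degree at least $4$ in $T$. Fix any four distinct neighbors $u_1,u_2,u_3,u_4$ of $v$ and consider the six requests $L_{ij}:=[u_i,u_j]$ for $1\le i<j\le 4$, each of which is simply the length-$2$ path $u_i{-}v{-}u_j$. A direct check shows that $L_{ij}$ and $L_{kl}$ are edge-disjoint if and only if $\{i,j\}\cap\{k,l\}=\emptyset$, so the edge-disjointness structure on these six paths is isomorphic to the edge-adjacency structure of $K_4$. In particular, the maximum number of pairwise edge-disjoint paths among the $L_{ij}$'s equals the size of a maximum matching of $K_4$, namely $2$, and each $L_{ij}$ has a unique \emph{partner} $\overline{L_{ij}}:=L_{kl}$ with $\{k,l\}=\{1,2,3,4\}\setminus\{i,j\}$ with which it is edge-disjoint.

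Given an adaptive-priority algorithm \textsc{ALG}, I would play the following adversary. Initialize the \emph{pool} $P=\{L_{ij}:1\le i<j\le 4\}$. In Phase~1, repeatedly let $z$ be the current $\prec$-maximum of $P$, remove $z$ from $P$, and reveal it to \textsc{ALG}. If \textsc{ALG} accepts, set $y:=z$ and switch to Phase~2; otherwise keep going with Phase~1 as long as $P$ is nonempty. In Phase~2, at each step take the current $\prec$-maximum $z$ of $P$: if $z=\overline y$, silently remove it from $P$ without feeding it to \textsc{ALG} (so $\overline y$ will not belong to the final instance); otherwise remove $z$ from $P$ and feed it to \textsc{ALG}, which is forced to reject because $z$ shares an edge incident to $v$ with $y$. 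The game ends when $P$ becomes empty.

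The analysis of the resulting instance $I$ is then short. If \textsc{ALG} never accepts a request, $I$ equals all six $L_{ij}$, so $|\textsc{OPT}(I)|=2$ and $|\textsc{ALG}(I)|=0$, giving an infinite ratio. Otherwise $|\textsc{ALG}(I)|=1$, and $I$ contains $y=L_{ij}$ together with all four $L_{ab}$ sharing an index with $y$ (and possibly $\overline y$ as well, if it was revealed and rejected during Phase~1). Writing $\{k,l\}=\{1,2,3,4\}\setminus\{i,j\}$, the pair $\{L_{ik},L_{jl}\}$ consists of two edge-disjoint requests in $I$, so $|\textsc{OPT}(I)|\ge 2$ and the ratio is at least $2$.

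The step requiring the most care is verifying that this reveal/exclude scheme really corresponds to a legal adaptive-priority execution: I need to show that there is a single instance $I\subseteq P$ -- precisely the set of revealed requests -- for which the actual adaptive-priority run of \textsc{ALG} sees requests in exactly the order the adversary fed them. This holds because at every step the pool $P$ contains all requests the adversary has not yet touched, so the $\prec$-maximum of $P$ coincides with the $\prec$-maximum of those future requests that will end up in $I$ (excluded items never enter $I$). Hence the adversary's sequence of reveals is exactly the sequence \textsc{ALG} would see if we fed it the final $I$ under its own adaptive priorities, which completes the argument.
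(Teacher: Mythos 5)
Your proposal is correct and follows essentially the same route as the paper: both work with the six length-$2$ paths through a degree-$\ge 4$ vertex, force \textsc{ALG} to accept one of them, and then exhibit two edge-disjoint paths (a perfect matching of the $K_4$ structure) that are both blocked by the accepted request. The only difference is bookkeeping -- the paper's adversary simply stops with the single-request instance $\{p\}$ the moment the highest-priority length-$2$ path is rejected, whereas your adversary keeps feeding the remaining paths; both yield the same bound.
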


\begin{proof}
Let \textsc{ALG} be an adaptive-priority algorithm for \textnormal{DPA} on trees, $T \in \mathcal{T}$ a tree with maximum degree at least 4, and $v_0 \in V(T)$ with $\deg(v_0) \geq 4$. Let $v_1$, $v_2$, $v_3$, $v_4$ be distinct neighbors of $v_0$ (see \cref{figlowcat}) and $p=[p_1,\,p_2]:=\max_{\prec}\{[v_i,\,v_j]\;|\;1 \leq i < j \leq 4\}$ the path of length 2 with highest initial priority $\prec$ among these vertices.

If \textsc{ALG} rejects $p$, it has no approximation ratio at all because it has a gain of $0$ on the instance consisting only of $p$, while the optimal solution has a gain of $1$ on this instance. If \textsc{ALG} accepts $p$, consider the instance consisting of $p$, followed by the requests $[p_1,\,x]$ and $[p_2,\,y]$, where $x\neq y \in\{v_1,\,...\,,\,v_4\}\setminus \{p_1,\,p_2\}$. \textsc{ALG} cannot accept these requests because they intersect $p$, and thus it has a gain of 1 on this instance, while the optimal solution does not include $p$ but the other two requests.
\end{proof}

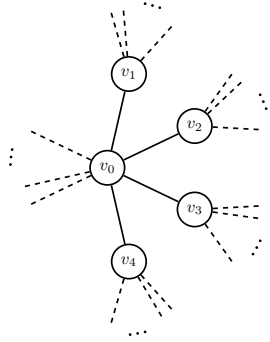
\begin{figure}[t]
\centering
\resizebox{.27\textwidth}{!}{
	\begin{tikzpicture}

	\node[circle, draw, line width=1] (center) {$v_0$};

	\node [circle, draw, line width=1] at ({3*360/14}:2) (n1) {$v_1$};
	\draw [line width=1](center)--(n1);

	\node [circle, draw, line width=1] at ({1*360/14}:2) (n2) {$v_2$};
	\draw [line width=1](center)--(n2);

	\node [circle, draw, line width=1] at ({(-1)*360/14}:2) (n3) {$v_3$};
	\draw [line width=1](center)--(n3);

	\node [circle, draw, line width=1] at ({(-3)*360/14}:2) (n4) {$v_4$};
	\draw [line width=1](center)--(n4);
	
	\node [circle, line width=1] at ({6*360/14}:2) (p01) {};
	\draw [line width=1, dashed](center)--(p01);
	\node [circle, line width=1] at ({7.5*360/14}:2) (p02) {};
	\draw [line width=1, dashed](center)--(p02);
	\node [circle, line width=1] at ({8*360/14}:2) (p03) {};
	\draw [line width=1, dashed](center)--(p03);

	\node [circle, fill=black, inner sep=0, draw] at ({6.575*360/14}:2) (q01) {};
	\node [circle, fill=black, inner sep=0, draw] at ({6.75*360/14}:2) (q02) {};
	\node [circle, fill=black, inner sep=0, draw] at ({6.925*360/14}:2) (q03) {};

	\node[circle, line width=1] at ({2.5*360/14}:3.5) (p11) {};
	\draw [line width=1, dashed](n1)--(p11);
	\node[circle, line width=1] at ({3.3*360/14}:3.5) (p12) {};
	\draw [line width=1, dashed](n1)--(p12);
	\node[circle, line width=1] at ({3.5*360/14}:3.5) (p13) {};
	\draw [line width=1, dashed](n1)--(p13);

	\node [circle, fill=black, inner sep=0, draw] at ({2.8*360/14}:3.5) (q11) {};
	\node [circle, fill=black, inner sep=0, draw] at ({2.9*360/14}:3.5) (q12) {};
	\node [circle, fill=black, inner sep=0, draw] at ({3*360/14}:3.5) (q13) {};

	\node[circle, line width=1] at ({0.5*360/14}:3.5)  (p21) {};
	\draw [line width=1, dashed](n2)--(p21);
	\node[circle, line width=1] at ({1.3*360/14}:3.5)  (p22) {};
	\draw [line width=1, dashed](n2)--(p22);
	\node[circle, line width=1] at ({1.5*360/14}:3.5)  (p23) {};
	\draw [line width=1, dashed](n2)--(p23);
	
	\node [circle, fill=black, inner sep=0, draw] at ({0.8*360/14}:3.5) (q21) {};
	\node [circle, fill=black, inner sep=0, draw] at ({0.9*360/14}:3.5) (q22) {};
	\node [circle, fill=black, inner sep=0, draw] at ({1*360/14}:3.5) (q23) {};	
	\node[circle, line width=1] at ({(-1.5)*360/14}:3.5)  (p31) {};
	\draw [line width=1, dashed](n3)--(p31);
	\node[circle, line width=1] at ({(-0.7)*360/14}:3.5)  (p32) {};
	\draw [line width=1, dashed](n3)--(p32);
	\node[circle, line width=1] at ({(-0.5)*360/14}:3.5)  (p33) {};
	\draw [line width=1, dashed](n3)--(p33);

	\node [circle, fill=black, inner sep=0, draw] at ({(-1.2)*360/14}:3.5) (q31) {};
	\node [circle, fill=black, inner sep=0, draw] at ({(-1.1)*360/14}:3.5) (q32) {};
	\node [circle, fill=black, inner sep=0, draw] at ({(-1)*360/14}:3.5) (q33) {};

	\node[circle, line width=1] at ({(-3.5)*360/14}:3.5)  (p41) {};
	\draw [line width=1, dashed](n4)--(p41);
	\node[circle, line width=1] at ({(-2.7))*360/14}:3.5)  (p42) {};
	\draw [line width=1, dashed](n4)--(p42);
	\node[circle, line width=1] at ({(-2.5))*360/14}:3.5)  (p43) {};
	\draw [line width=1, dashed](n4)--(p43);
	
	\node [circle, fill=black, inner sep=0, draw] at ({(-3.2)*360/14}:3.5) (q41) {};
	\node [circle, fill=black, inner sep=0, draw] at ({(-3.1)*360/14}:3.5) (q42) {};
	\node [circle, fill=black, inner sep=0, draw] at ({(-3)*360/14}:3.5) (q43) {};

	\end{tikzpicture}
}
\caption{The construction used in the proof of \cref{lowcat} for some tree $T$ with maximum degree at least $4$. Say, for example, that the request $[v_1,\,v_2]$ has highest initial priority among all paths of length $2$ in the subtree $T(\{v_0,\,v_1,\,...\,,\,v_4\}) \cong S_4= K_{1,4}$. Consider the instance consisting of $[v_1,\,v_2]$, followed by $[v_1,\,v_3]$ and by $[v_2,\,v_4]$. The algorithm must accept the first request and thus reject the other two, while the optimal solution rejects the first request but accepts the other two.}
\label{figlowcat}
\end{figure}

With this tight lower bound on the approximation ratio achievable without advice, we now turn to the advice setting. We begin with an optimal fixed-priority algorithm reading advice. As in the proof of \cref{uplwdpaadv}, the advice string is going to encode some optimal solution for the instance at hand, which the algorithm is going to decode. This fixed optimal solution will in fact again be the \emph{greediest} optimal solution, as defined in \cref{greedopt}.

At first, our result might seem surprising in that the number of advice bits the algorithm requires does not directly depend on the size of the tree on which the given instance is defined. However, this makes intuitive sense, considering that the \textnormal{DPA} problem on paths -- or, more generally, on trees with maximum degree at most 3 -- is optimally solvable without advice. We can thus expect the amount of advice necessary to achieve optimality on a certain tree to be small if it contains large subtrees without vertices of degree greater than $3$.

\begin{theorem}\label{upcatadv}
There is an optimal fixed-priority algorithm for \textnormal{DPA} on trees with advice complexity $(\theta_1(T)-\theta_3(T)-2) \left\lceil \log \frac{\Delta (T)}{2}\right\rceil $, where $\theta_1(T)$ denotes the number of leaves in the tree $T$, $\theta_3(T)$ the number of vertices of degree $3$ and $\Delta (T)$ the maximum degree in $T$.
\end{theorem}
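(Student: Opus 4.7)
The plan is to extend \cref{upcatcor} by encoding just enough of the greediest optimal $\textsc{OPT}_{\textnormal{Gr}}(I)$ (from \cref{greedopt}) to resolve the ambiguities that arise at high-degree vertices, in the spirit of \cref{uplwdpaadv}. The algorithm uses the refined fixed-priority order of \cref{upcatcor}---ordering requests by non-increasing peak-distance from a fixed leaf-root $w$ and, among paths with the same peak $s$, placing those with $s$ as an end-vertex before those with $s$ as an interior vertex---and accepts a request exactly when the advice indicates that it belongs to $\textsc{OPT}_{\textnormal{Gr}}(I)$. By \cref{upcatcor}, the advice-free greedy rule already matches $\textsc{OPT}_{\textnormal{Gr}}(I)$ at every peak of degree at most $3$, so advice is needed only at vertices of degree $d \ge 4$.

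For each such high-degree vertex $v$, I would allot a block of $(d-2)\lceil\log(\Delta(T)/2)\rceil$ advice bits, read the first time the algorithm encounters a request with peak $v$. These bits encode the local matching-with-singletons pattern of $\textsc{OPT}_{\textnormal{Gr}}(I)$ on $v$'s $d-1$ downward child-edges: which pairs of child-edges carry a through-path with peak $v$, which individual child-edges host an endpoint-path with peak $v$, and which remain free (possibly to be used later by a path with peak above $v$). The number of such patterns on $d-1$ items is bounded by the telephone number $T(d-1)$, and a direct check gives $\log T(d-1) \le (d-2)\lceil\log(d/2)\rceil \le (d-2)\lceil\log(\Delta(T)/2)\rceil$, so the allotted block is always large enough.

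Summing over all vertices of degree at least $4$ gives the claimed total via the handshake identity $\sum_{v \in V(T)}(\deg(v)-2) = 2|E(T)|-2|V(T)| = -2$, which, split by degree, rearranges to $\sum_{i\ge 4}(i-2)\,\theta_i(T) = \theta_1(T)-\theta_3(T)-2$. Correctness would then follow by induction on $|I|$ along the lines of the proofs of \cref{upcat} and \cref{upcatcor}: peeling off $p := \min_{\prec}\textsc{ALG}(I)$ and forming the reduced instance $I'$ by removing $p$ together with the requests following and intersecting it, one verifies that the advice block at each high-degree peak remains consistent when restricted to $I'$, and invokes the inductive hypothesis to conclude that $\textsc{ALG}$ reproduces $\textsc{OPT}_{\textnormal{Gr}}(I)$.

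The main obstacle is the structural claim underlying the encoding: one must prove that the local matching pattern at $v$, combined with the greedy rule in subtrees of maximum degree at most $3$ and with the decisions forced at other high-degree vertices, already determines every choice of $\textsc{OPT}_{\textnormal{Gr}}(I)$ involving $v$, so that no further information need be communicated there. This hinges on the priority order---endpoint-paths preceding through-paths at each peak---together with the greediness built into $\textsc{OPT}_{\textnormal{Gr}}$, which force endpoint-paths to be accepted whenever viable and restrict through-paths to child-pairs whose subtrees offer no unblocked endpoint-path in the instance.
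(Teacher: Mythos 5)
There is a genuine gap, and it is in the counting of your local encoding. You propose to encode, per high-degree vertex $v$ with $d=\deg(v)$, the full ``matching-with-singletons pattern'' on the $d-1$ child-edges, where a singleton edge is additionally marked as either hosting an endpoint-path with peak $v$ or remaining free. That pattern space is \emph{not} counted by the telephone number $T(d-1)$: involutions have uncolored fixed points, whereas your singletons carry one of two types. The number of partial matchings on $m$ elements with $2$-colored singletons is $\sum_k \binom{m}{2k}(2k-1)!!\,2^{m-2k}$, which for $d=4$ (so $m=3$) equals $14$ and needs $4$ bits, while your allotted block is $(d-2)\lceil\log(d/2)\rceil = 2$ bits. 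So the encoding as described does not fit the claimed budget even in the smallest relevant case. The repair is to observe that endpoint-paths with peak $v$ never need advice at all: any path $p$ whose peak is one of its end-vertices has a \emph{unique} potential blocker of lower priority (it must contain the single edge of $p$ incident to $s_p$, by the argument in \cref{upcat}), so the greediness of $\textup{\textsc{OPT}}_{\textnormal{Gr}}(I)$ forces such a $p$ to be accepted whenever unblocked. The paper therefore accepts all unblocked endpoint-paths at $v$ greedily first, and only then labels the \emph{remaining} unblocked child-edges with a pairing (label $0$ for unpaired, equal positive labels for paired), conveying $n_v-1\le d-2$ per-edge labels from a set of size at most $\lceil d/2\rceil$ and inferring the last one; this is what makes $(d-2)\lceil\log(d/2)\rceil$ bits suffice. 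You state in your last paragraph that endpoint-paths are ``forced to be accepted whenever viable,'' but your encoding does not exploit this, and the count breaks as a result.

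Separately, the correctness argument you sketch is not in a workable shape. The statement you flag as ``the main obstacle'' -- that the local pairing at $v$, together with greedy behaviour at peaks that are end-vertices or have degree at most $3$, pins down every decision of $\textup{\textsc{OPT}}_{\textnormal{Gr}}(I)$ -- is precisely the content that has to be proved; the paper does this via three claims (uniqueness of the blocker for endpoint-peaks, uniqueness for degree-$\le 3$ peaks, and the fact that a label-matching path can only conflict with the single optimal path sharing its two edges at $v$), yielding a direct request-by-request invariant that the partial solution equals $\textup{\textsc{OPT}}_{\textnormal{Gr}}(I)$. Your proposed induction that peels off $p:=\min_{\prec}\textsc{ALG}(I)$ and passes to $I'$ is borrowed from the competitive-ratio proofs of \cref{upcat} and \cref{upcatcor}, but it does not transfer cleanly here: $\textup{\textsc{OPT}}_{\textnormal{Gr}}(I')$ need not be the restriction of $\textup{\textsc{OPT}}_{\textnormal{Gr}}(I)$, so ``the advice block remains consistent on $I'$'' is exactly the kind of assertion that fails without the structural claims above. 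You also need the priority order to present all requests with the same peak consecutively (an extra refinement beyond \cref{upcatcor}, obtained by totally ordering same-depth peaks), so that a fixed-priority algorithm knows when one advice block ends and the next begins.
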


\begin{proof}
We are again extending the partial order on $\mathcal{U}$ that we initially defined in the proof of \cref{upcat} and already extended once for \cref{upcatcor}. We pose the additional condition that all requests in the given instance that share the same peak are presented consecutively to our algorithm. To achieve this, we choose some arbitrary total order $<_T$ on the set $V(T)$ for each $T \in \mathcal{T}$ and define, for each $T \in \mathcal{T}$ and all requests $p$, $p'$ defined on $T$,
\[d(s_p,\,w)>d(s_{p'},\,w) \implies p\succ p',\]
\[d(s_p,\,w)=d(s_{p'},\,w)\;\textnormal{and}\;s_p >_{T} s_{p'} \implies p\succ p',\]
\[s_p=s_{p'}\;\textnormal{and}\;(\textnormal{$s_p$ is an end-vertex of $p$ but not of $p'$}) \implies p\succ p',\]
and extend this partial order to some total order on $\mathcal{U}$. With this priority order, the algorithm we define below will again be a quasi-greedy algorithm, as in the proof of \cref{uplwdpaadv}.

Let $\textsc{OPT}_{\textnormal{Gr}}(I)$ again be the greediest optimal solution for the instance $I$ with respect to $\prec$. In \textnormal{DPA}, we have a simple characterization of $\textsc{OPT}_{\textnormal{Gr}}(I)$. Namely, for all $p \succ p'\in I$ with $p' \in \textsc{OPT}_{\textnormal{Gr}}(I)\;\textnormal{and}\;p\notin \textsc{OPT}_{\textnormal{Gr}}(I)$, it holds that the set $(\textsc{OPT}_{\textnormal{Gr}}(I)\cup \{p\})\setminus \{p'\}$ is not a valid solution for $I$, i.e., it contains intersecting paths. This is because if it were a valid solution, it would in fact be an optimal solution, since it has the same size as $\textsc{OPT}_{\textnormal{Gr}}(I)$. This, however, means that $p$ should have been included in $\textsc{OPT}_{\textnormal{Gr}}(I)$ by \cref{greedopt}.

\begin{claim}\label{claim:peakingredopt}
Let $p\in I$ be a path such that its peak $s_p$ is an end-vertex of $p$ and such that $p$ does not intersect a path in $\textsc{OPT}_{\textnormal{Gr}}(I)$ with higher priority. It holds that $p\in \textsc{OPT}_{\textnormal{Gr}}(I)$.
\end{claim}

\begin{proof}
Assume the contrary, i.e., assume that $s_p$ is an end-vertex of $p$, that $p$ does not intersect a path in $\textsc{OPT}_{\textnormal{Gr}}(I)$ with higher priority, and that $p\notin \textsc{OPT}_{\textnormal{Gr}}(I)$. There must be a later request $p'\prec p$ contained in $\textsc{OPT}_{\textnormal{Gr}}(I)$ that intersects $p$ because otherwise, we could add $p$ to $\textsc{OPT}_{\textnormal{Gr}}(I)$ without creating an intersection, contradicting the fact that $\textsc{OPT}_{\textnormal{Gr}}(I)$ is optimal. As discussed in the proof of \cref{upcat}, $p'$ is unique because it contains the edge of $p$ incident to $s_p$. However, this means that $(\textsc{OPT}_{\textnormal{Gr}}(I)\cup \{p\})\setminus \{p'\}$ is also a valid solution for $I$, which is a contradiction.
\end{proof}

\begin{claim}\label{claim:degingredopt}
Let $p\in I$ be a path such that $\deg(s_p) \leq 3$ and such that $p$ does not intersect a path in $\textsc{OPT}_{\textnormal{Gr}}(I)$ with higher priority. It holds that $p\in \textsc{OPT}_{\textnormal{Gr}}(I)$.
\end{claim}

\begin{proof}
If we assume the contrary, there must again be some request with lower priority than $p$ that is contained in $\textsc{OPT}_{\textnormal{Gr}}(I)$, by optimality of $\textsc{OPT}_{\textnormal{Gr}}(I)$. As discussed in the proof of \cref{upcatcor}, $\deg(s_p) \leq 3$ implies that this later request is again unique, which just as above leads to a contradiction with the greediness of $\textsc{OPT}_{\textnormal{Gr}}(I)$.
\end{proof}

Thus, our algorithm can operate greedily on requests whose peaks coincide with one of their end points or have degree at most $3$. More precisely, if our algorithm must decide whether to accept or reject a request $p$ and we assume that its partial solution on the requests preceding $p$ agrees with $\textsc{OPT}_{\textnormal{Gr}}(I)$, accepting $p$ if it is not yet blocked and rejecting it otherwise extends this partial solution in such a way that it still agrees with $\textsc{OPT}_{\textnormal{Gr}}(I)$, given that $s_p$ is an end-vertex of $p$ or that $\deg(s_p)\leq 3$.

We now use the following strategy: We partition the instance $I$ into disjoint phases, such that phase $I_v$ contains all paths with peak $v$. Due to our condition on the priority order that all requests with peak $v$ are presented consecutively, the algorithm knows that the previous phase is finished and the new phase $I_v$ starts as soon as it first receives a request with the new peak $v$.

\begin{figure}[t]
\centering
\resizebox{.55\textwidth}{!}{
\begin{tikzpicture}

	\node[shape=circle, solid, line width=1, minimum size=2em, scale=0.75, draw] at (0,5.5) (w) {$w$};
	\node[shape=circle, solid, line width=1, minimum size=2em, scale=0.75, draw] at (0,4.25) (n1) {$1$};
	\draw[dashed, line width=2] (w)--(n1);
	\node[shape=circle, solid, line width=1, minimum size=2em, scale=0.75, draw] at (-1.25,4.25) (n2) {$2$};
	\draw[solid, line width=2, color=lipicsYellow] (n1)--(n2);
	\node[shape=circle, solid, line width=1, minimum size=2em, scale=0.75, draw] at (0,3) (n3) {$3$};
	\draw[solid, line width=2, color=lipicsYellow] (n1)--(n3);

	\node[shape=circle, solid, line width=1, minimum size=2em, scale=0.75, draw] at (-4.5,0) (n11) {$4$};
	\draw[solid, line width=2, color=LimeGreen] (n3) to[out=180, in=90] node[left, minimum size=4em, color=black]{$e_3^1$}(n11);
	\node[shape=circle, solid, line width=1, minimum size=2em, scale=0.75, draw] at (-3,0) (n12) {$5$};
	\draw[solid, line width=2, color=RedOrange] (n3) to[out=210, in=90] node[left, minimum size=3em, color=black]{$e_3^2$}(n12);
	\node[shape=circle, solid, line width=1, minimum size=2em, scale=0.75, draw] at (-1.5,0) (n13) {$6$};
	\draw[dashed, line width=2] (n3) to[out=240, in=90] node[left, minimum size=3em]{$e_3^3$}(n13);
	\node[shape=circle, solid, line width=1, minimum size=2em, scale=0.75, draw] at (0,0) (n14) {$7$};
	\draw[solid, line width=2, color=RedOrange] (n3) to[out=270, in=90] node[left, minimum size=3em, color=black]{$e_3^4$}(n14);
	\node[shape=circle, solid, line width=1, minimum size=2em, scale=0.75, draw] at (1.5,0) (n15) {$8$};
	\draw[solid, line width=2, color=LimeGreen] (n3) to[out=300, in=90] node[left, minimum size=3em, color=black]{$e_3^5$}(n15);
	\node[shape=circle, solid, line width=1, minimum size=2em, scale=0.75, draw] at (3,0) (n16) {$9$};
	\draw[solid, line width=2, color=NavyBlue] (n3) to[out=330, in=90] (n16);
	\node[shape=circle, solid, line width=1, minimum size=2em, scale=0.75, draw] at (4.5,0) (n17) {$10$};
	\draw[solid, line width=2, color=lipicsYellow] (n3) to[out=360, in=90] node[left, minimum size=5em, color=black]{$e_3^6$}(n17);
	
	\node[shape=circle, solid, line width=1, minimum size=2em, scale=0.75, draw] at (-4.5,-1.25) (n111) {$11$};
	\draw[solid, line width=2, color=LimeGreen] (n11) to[out=270, in=90](n111);

	\node[shape=circle, solid, line width=1, minimum size=2em, scale=0.75, draw] at (-2,-1.25) (n131) {$12$};
	\draw[solid, line width=2, color=Thistle] (n13)--(n131);
	\node[shape=circle, solid, line width=1, minimum size=2em, scale=0.75, draw] at (-1,-1.25) (n132) {$13$};
	\draw[solid, line width=2, color=Thistle] (n13)--(n132);

	\node[shape=circle, solid, line width=1, minimum size=2em, scale=0.75, draw] at (3,-1.25) (n161) {$14$};
	\draw[solid, line width=2, color=NavyBlue] (n16)--(n161);

	\node[shape=circle, solid, line width=1, minimum size=2em, scale=0.75, draw] at (4.5,-1.25) (n171) {$15$};
	\draw[solid, line width=2, color=lipicsYellow] (n17)--(n171);
    
\end{tikzpicture}
}
\caption{Example of a labeling in the proof of \cref{upcatadv}. Let $I$ be an instance defined on the tree above with $\textsc{OPT}_{\textnormal{Gr}}(I)$ consisting of the paths colored uniformly in the figure, i.e., the paths $[11,\,8]$, $[5,\,7]$, $[12,\,13]$, $[3,\,14]$ and $[2,\,15]$. Consider our algorithm in the phase $I_3$, containing all requests in $I$ with peak $3$. By definition of the priority order, the path $[3,\,14]$ with end-vertex $3$ has highest priority among all the requests in $I_3$ and is greedily accepted by the algorithm without requiring advice. The six remaining, i.e., still unblocked edges between the vertex $3$ and its children are denoted by $e_3^1,\,e_3^2,\,,,,\,,\,e_3^6$ and labeled. First, the edges $e_3^3$ and $e_3^6$ receive the label $0$ because they are not part of any optimal path ($e_3^3$) or part of an optimal path belonging to a later phase ($e_3^6$). Then, the edges $e_3^1$ and $e_3^5$ receive the same label, belonging to the optimal path $[11,\,8]$, and $e_3^2$, $e_3^4$ receive the same label, belonging to $[5,\,7]$. Thus, the sequence $1,\,2,\,0,\,2,\,1,\,(0)$ encodes the optimal solution for the phase $I_3$. The last label ($0$) does not need to be conveyed since the algorithm can infer it from the other labels.}
\label{figupcatadv}
\end{figure}
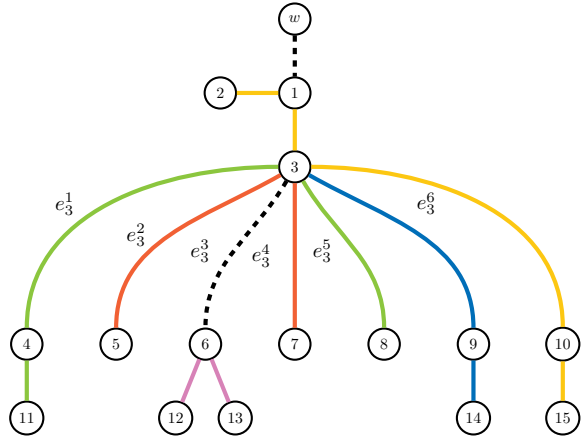

If $v$ has degree at most $3$, the algorithm operates greedily during the entire phase $I_v$ without reading advice. If $v$ has degree greater than $3$, the algorithm first greedily accepts all unblocked paths in $I_v$ with end point $v$, again requiring no advice. For the remaining requests in $I_v$, the algorithm needs advice:

Similarly to the advice strategy used by Böckenhauer et al.~\cite{bockenhauer2019call}, the basic idea is that the advice string communicates to the algorithm which pairs of edges incident to $v$ belong to the same optimal request in the remainder of $I_v$. More precisely, we first fix an ordering of the remaining edges $e_v^1,\,e_v^2,\,...,e_v^{n_v}$  between $v$ and its children. Remaining edges are those edges that are not yet blocked by accepted requests with end point $v$. Note that $n_v\leq \deg(v)-1$ since we chose a leaf as the root in the proof of \cref{upcat} and thus all vertices with degree greater than $3$ have a parent. Now, the edges $e_v^1,\,e_v^2,\,...,e_v^{n_v}$ each receive a label to denote which of them are part of the same path in $\textsc{OPT}_{\textnormal{Gr}}(I)$.

This is done by first labeling edges that are not part of any optimal path belonging to $I_v$ with $0$. Note that there are two kinds of such edges: On the one hand, there are those that are not part of any optimal path at all, and on the other, there possibly is one edge that is part of an optimal path belonging to a later phase.

There now remain $n_v'\leq n_v$ edges, constituting $n_v'/2$ pairs of edges belonging to the same optimal path in $I_v$. These edges receive labels in $\{1,\,2,\,...\,,\,n_v'/2\}$, such that two edges receive the same label if and only if they are contained in the same optimal path. See \cref{figupcatadv} for an example of a labeling.

Having the labels for phase $I_v$, the algorithm checks for each unblocked path $p \in I_v$ whether its edges incident to $v$ are labeled identically with a strictly positive label and accepts $p$ only if they are. If $p$ is blocked, the algorithm naturally rejects it.

\begin{claim}\label{claim:advworks}
If the algorithm's partial solution on the requests preceding $p$ agrees with $\textsc{OPT}_{\textnormal{Gr}}(I)$, this will still be true after accepting or rejecting $p$ according to this rule.
\end{claim}
\begin{proof}
If the algorithm rejects $p$, the claim is clear. If the algorithm accepts $p$, the labels must match and there must thus be some request $p' \preceq p,\;p' \in \textsc{OPT}_{\textnormal{Gr}}(I)$ containing the same two edges incident to $v$ as $p$, by definition of the labeling.

We know that $p$ does not intersect a path in $\textsc{OPT}_{\textnormal{Gr}}(I)$ with higher priority since the two solutions agree on those requests and the algorithm would thus have accepted $p$ even though it was blocked. Furthermore, $p$ also does not intersect a path $q \in \textsc{OPT}_{\textnormal{Gr}}(I)$ distinct from $p'$ with $q \prec p$ since by Case~2 in \cref{upcatlem2}, $q$ would contain at least one of the two edges of $p$ incident to $v$ and would thus intersect $p'$.
\newpage
Thus, $p'$ is the only request in $\textsc{OPT}_{\textnormal{Gr}}(I)$ that intersects $p$. This means that $(\textsc{OPT}_{\textnormal{Gr}}(I)\cup \{p\})\setminus \{p'\}$ is also a valid solution for $I$, which is a contradiction if $p' \succ p$. We therefore have that $p'=p$ which yields the claim.
\end{proof}

Inductively, we thus know that the algorithm perfectly recreates $\textsc{OPT}_{\textnormal{Gr}}(I)$ if it has the labels. It remains to bound the length of the advice string necessary to convey them.

For phases $I_v$ with $\deg(v)\leq3$, no advice is necessary. For the other phases, note that not all the labels $e_v^1,\,e_v^2,\,...,e_v^{n_v}$ need to be conveyed: The last label $e_v^{n_v}$ can be inferred by the algorithm by checking whether all the strictly positive labels appear twice among $e_v^1,\,e_v^2,\,...\,,e_v^{n_v-1}$. If they do, the last label must be 0, else, it must be the missing positive label. Thus, for each phase $I_v$ with $\deg(v)\geq4$, the advice needs to encode $n_v-1\leq \deg(v)-2$ numbers between $0$ and $\frac{n_v'}{2}$. Because $n_v'$ is even, we have $\frac{n_v'}{2}\leq \lfloor\frac{n_v}{2}\rfloor\leq \lfloor\frac{\deg(v)-1}{2}\rfloor$. Thus, the number of advice bits sufficient to convey the labels for phase $I_v$ is
\begin{align*}
(\deg(v)-2)\left\lceil\log\left(\left\lfloor \frac{\deg(v)-1}{2}\right\rfloor+1\right)\right\rceil&=(\deg(v)-2)\left\lceil\log \left\lceil \frac{\deg(v)}{2}\right\rceil \right\rceil\\
&=(\deg(v)-2) \left\lceil\log\frac{\deg(v)}{2} \right\rceil.
\end{align*}
Since this number only depends on the degree of the peak, the algorithm knows how many bits to read from the advice string in each phase and thus the strings for all the phases can be concatenated, resulting in an advice string of length
\[\sum_{\substack{v\\\deg(v)\geq 4}}{(\deg(v)-2)\left\lceil\log\frac{\deg(v)}{2}\right\rceil}.\]
We denote by $\Delta=\Delta (T)$ the maximum degree in the tree $T$ and by $\theta_1$, $\theta_2$ and $\theta_3$ the number of vertices of degree $1$, $2$ and $3$, respectively. Using that a tree with $m$ edges contains $m+1$ vertices and that $\sum_{v}{\deg(v)}=2m$ yields

\settowidth{\characterlength}{=(}

\begin{align*}
&\hspace{\characterlength} \sum_{\substack{v\\\deg(v)\geq 4}}{(\deg(v)-2) \left\lceil \log\frac{\deg(v)}{2}\right\rceil}\\
&\leq \sum_{\substack{v\\\deg(v)\geq 4}}{(\deg(v)-2)} \left\lceil \log \frac{\Delta}{2} \right\rceil\\
&=\left(\sum_{v}{(\deg(v)-2)}-\sum_{\substack{v\\\deg(v)\leq 3}}{(\deg(v)-2)}\right) \left\lceil \log \frac{\Delta}{2} \right\rceil\\
&=(2m-2(m+1)-(\theta_1 \cdot(-1)+\theta_2 \cdot 0 +\theta_3 \cdot 1)) \left\lceil \log \frac{\Delta}{2} \right\rceil\\
&=(\theta_1-\theta_3-2) \left\lceil \log \frac{\Delta}{2} \right\rceil.\qedhere
\end{align*}
\end{proof}

Lastly, we again use a reduction of the online binary string guessing problem to derive lower bounds on the advice complexity of priority algorithms for \textnormal{DPA} on trees with small approximation ratios. As in the proof of \cref{lowlwdpaadv}, the core idea is to translate guessing a bit into guessing whether a certain request should be accepted or not. We will use a similar technique as in the proof of \cref{lowcat}, making this request the length-two path with highest priority in a subtree isomorphic to the star tree $S_4$, i.e., the complete bipartite graph $K_{1,4}$. We require the following lemma.

\begin{lemma}\label{lowcatlem}
Let $T$ be a tree. $T$ contains at least $\left \lceil \frac{1}{2}\sum_{v \in V(T)}{\left\lfloor \frac{\deg(v)}{4}\right\rfloor}\right \rceil$ pairwise edge-disjoint subtrees isomorphic to $S_4$.
\end{lemma}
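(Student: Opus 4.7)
\medskip
\noindent\emph{Proof proposal.} The plan is to exploit the bipartiteness of $T$. Fix an arbitrary root and two-color $V(T)$ according to the parity of the distance from the root, obtaining a bipartition $V(T) = A \sqcup B$. The key observation is that, because adjacent vertices in a bipartite graph lie in opposite classes, any two distinct vertices in the same class are non-adjacent in $T$; in particular, the sets of edges incident to two distinct vertices of $A$ (respectively of $B$) are disjoint.

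Given this, I would construct a collection of edge-disjoint $S_4$ subtrees as follows. For each vertex $v \in A$ with $\deg(v) \geq 4$, arbitrarily choose $4 \lfloor \deg(v)/4 \rfloor$ of the edges incident to $v$ and partition them into groups of four; each such group together with $v$ as center forms a subtree isomorphic to $S_4$. By the observation above, the resulting $\sum_{v \in A} \lfloor \deg(v)/4 \rfloor$ stars are automatically pairwise edge-disjoint, since no edge of $T$ is incident to two distinct vertices of $A$. The symmetric construction applied to $B$ produces a collection of $\sum_{v \in B} \lfloor \deg(v)/4 \rfloor$ pairwise edge-disjoint $S_4$ subtrees.

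To conclude, I would invoke a simple averaging: at least one of the two sums is at least half of the grand total $\sum_{v \in V(T)} \lfloor \deg(v)/4 \rfloor$, and since the number of stars is an integer, picking the larger of the two classes yields the claimed lower bound of $\lceil \tfrac{1}{2} \sum_{v \in V(T)} \lfloor \deg(v)/4 \rfloor \rceil$. The only nontrivial insight is the bipartiteness observation that rules out all conflicts between stars centered at distinct same-class vertices; everything else is counting. I therefore do not anticipate any real obstacle.
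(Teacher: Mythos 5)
Your proof is correct, and it takes a genuinely different route from the paper's. The paper proceeds by induction on $s(T):=\sum_{v}\lfloor \deg(v)/4\rfloor$: it locates a vertex $u$ of degree at least $4$ having at most one neighbor of degree at least $4$ (such a vertex exists because the subgraph induced by the high-degree vertices is a forest), harvests $\lfloor\deg(u)/4\rfloor$ stars centered at $u$, deletes $u$, and applies the induction hypothesis to the resulting components after checking that $s$ drops by at most $\lfloor\deg(u)/4\rfloor+1$. Your argument replaces all of this with the bipartition of $T$: since every edge joins the two color classes, stars centered at distinct same-class vertices can never share an edge, so each class $C\in\{A,B\}$ immediately yields $\sum_{v\in C}\lfloor\deg(v)/4\rfloor$ pairwise edge-disjoint copies of $S_4$, and the averaging step (together with integrality of the count) gives $\max$ of the two $\geq\lceil s(T)/2\rceil$. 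Both proofs are complete; yours is shorter, avoids induction entirely, and in fact proves the statement for arbitrary bipartite graphs (and, mutatis mutandis, gives a $1/\chi(G)$ fraction for general graphs), whereas the paper's deletion argument is specific to forests. The only thing the inductive proof "buys" is slightly more local information (it actually produces $\tfrac{1}{2}s(T)+\tfrac{1}{2}(\lfloor\deg(u)/4\rfloor-1)$ stars at the top level), but nothing in the paper exploits this, so your approach is a clean improvement in exposition.
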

\begin{proof}
We use induction on $s(T):=\sum_{v \in V(T)}{\left\lfloor \frac{\deg(v)}{4}\right\rfloor}$. For $s(T)=0$, the claim is trivial. For $s(T) \geq 1$, let $u \in V(T)$ be a vertex of degree at least $4$ such that $u$ has at most one neighbor whose degree is also greater than or equal to $4$. If no such $u$ exists, let $H$ be a connected component of the subgraph of $T$ induced by the set of vertices of degree at least $4$. Because the minimal degree of $H$ is at least $2$, we get a contradiction since a connected subgraph of a tree is also a tree and nonempty trees have minimal degree $1$.

Let $G'$ be the forest that results from deleting $u$ from $G$. $G'$ is the disjoint union of $k:=\deg(u)$ trees $T_1,\,T_2,\,...\,,\,T_k$. Since deleting $u$ reduces the degree of at most one other vertex with degree at least $4$ by at most $1$, we have that
\[s(G')=\sum_{i=1}^{k}\sum_{v \in V(T_i)}{\left\lfloor \frac{\deg_{T_i}(v)}{4}\right\rfloor} \geq s(T)-\left\lfloor\frac{\deg(u)}{4}\right\rfloor-1.\]
On the other hand, $s(T_i) \leq s(G') \leq s(T)-\left\lfloor\deg(u)/4\right\rfloor \leq s(T)-1$ for all $1 \leq i \leq k$. Thus, by the induction hypothesis, each $T_i$ contains at least $s(T_i)/2$ pairwise edge-disjoint copies of $S_4$. In addition, $T$ contains $\left\lfloor\deg(u)/4\right\rfloor$ edge-disjoint copies, each consisting of $u$ and $4$ of its neighbors. This yields a total of
\begin{align*}
\left\lfloor \frac{\deg(u)}{4}\right\rfloor + \sum_{i=1}^{k} \frac{s(T_i)}{2} &= \left\lfloor \frac{\deg(u)}{4}\right\rfloor + \sum_{i=1}^{k} \, \frac{1}{2} \sum_{v \in V(T_i)} \left\lfloor \frac{\deg_{T_i}(v)}{4}\right\rfloor \\
&\geq \left\lfloor \frac {\deg(u)}{4}\right\rfloor + \frac{1}{2} \left(s(T)-\left\lfloor\frac{\deg(u)}{4}\right\rfloor-1 \right)\\
&=\frac{1}{2} \, s(T) + \frac{1}{2} \left( \left\lfloor \frac{\deg(u)}{4}\right\rfloor -1\right)\\
&\geq \frac{1}{2} \, s(T)
\end{align*}
pairwise edge-disjoint copies of $S_4$ in $T$.
\end{proof}

\begin{figure}[t]
\centering
\resizebox{.65\textwidth}{!}{
\begin{tikzpicture}

\node[circle, inner sep=0pt, minimum size=4pt] at (-2,0) (nodem2) {};

\node[circle, inner sep=0pt, minimum size=4pt] at (-1.5,0) (nodem15) {}; 
\draw[dashed, line width=1pt] (nodem2)--(nodem15);

\node[circle, fill=black, inner sep=0pt, minimum size=4pt] at (-1,0) (nodem1) {}; 
\draw[solid, line width=1pt] (nodem15)--(nodem1);

\node[circle, fill=black, inner sep=0pt, minimum size=4pt] at (0,0) (node0) {}; 
\draw[solid, line width=1pt] (nodem1)--(node0);

\node[circle, fill=black, inner sep=0pt, minimum size=4pt] at (1,0) (node1) {}; 
\draw[solid, line width=1pt] (node0)--(node1);

\node[circle, inner sep=0pt, minimum size=4pt] at (1.5,0) (node15) {}; 
\draw[solid, line width=1pt] (node1)--(node15);

\node[circle, inner sep=0pt, minimum size=4pt] at (2,0) (node2) {}; 
\draw[dashed, line width=1pt] (node15)--(node2);

\node[circle, fill=black, inner sep=0pt, minimum size=4pt] at (-1,1) (nodeum1) {}; 
\draw[solid, line width=1pt] (nodem1)--(nodeum1);
\node[circle, fill=black, inner sep=0pt, minimum size=4pt] at (-1,-1) (nodelm1) {}; 
\draw[solid, line width=1pt] (nodem1)--(nodelm1);

\node[circle, fill=black, inner sep=0pt, minimum size=4pt] at (0,1) (nodeu0) {}; 
\draw[solid, line width=1pt] (node0)--(nodeu0);
\node[circle, fill=black, inner sep=0pt, minimum size=4pt] at (0,-1) (nodel0) {}; 
\draw[solid, line width=1pt] (node0)--(nodel0);

\node[circle, fill=black, inner sep=0pt, minimum size=4pt] at (1,1) (nodeu1) {}; 
\draw[solid, line width=1pt] (node1)--(nodeu1);
\node[circle, fill=black, inner sep=0pt, minimum size=4pt] at (1,-1) (nodel1) {}; 
\draw[solid, line width=1pt] (node1)--(nodel1);

\node[circle, fill=black, inner sep=0pt, minimum size=4pt] at (-6,0) (nodem6) {};
\node[circle, fill=black, inner sep=0pt, minimum size=4pt] at (-5,0) (nodem5) {};
\draw[solid, line width=1pt] (nodem6)--(nodem5);
\node[circle, fill=black, inner sep=0pt, minimum size=4pt] at (-4,0) (nodem4) {}; 
\draw[solid, line width=1pt] (nodem5)--(nodem4);
\node[circle, inner sep=0pt, minimum size=4pt] at (-3.5,0) (nodem35) {}; 
\draw[solid, line width=1pt] (nodem4)--(nodem35);
\node[circle, inner sep=0pt, minimum size=4pt] at (-3,0) (nodem3) {};
\draw[dashed, line width=1pt] (nodem35)--(nodem3);

\node[circle, fill=black, inner sep=0pt, minimum size=4pt] at (-5,1) (nodeum5) {}; 
\draw[solid, line width=1pt] (nodem5)--(nodeum5);
\node[circle, fill=black, inner sep=0pt, minimum size=4pt] at (-5,-1) (nodelm5) {}; 
\draw[solid, line width=1pt] (nodem5)--(nodelm5);

\node[circle, fill=black, inner sep=0pt, minimum size=4pt] at (-4,1) (nodeum4) {}; 
\draw[solid, line width=1pt] (nodem4)--(nodeum4);
\node[circle, fill=black, inner sep=0pt, minimum size=4pt] at (-4,-1) (nodelm4) {}; 
\draw[solid, line width=1pt] (nodem4)--(nodelm4);

\node[circle, fill=black, inner sep=0pt, minimum size=1pt] at (-2.65,0) (p1) {};
\node[circle, fill=black, inner sep=0pt, minimum size=1pt] at (-2.5,0) (p2) {};
\node[circle, fill=black, inner sep=0pt, minimum size=1pt] at (-2.35,0) (p3) {}; 

\node[circle, fill=black, inner sep=0pt, minimum size=4pt] at (6,0) (node6) {};
\node[circle, fill=black, inner sep=0pt, minimum size=4pt] at (5,0) (node5) {};
\draw[solid, line width=1pt] (node6)--(node5);
\node[circle, fill=black, inner sep=0pt, minimum size=4pt] at (4,0) (node4) {}; 
\draw[solid, line width=1pt] (node5)--(node4);
\node[circle, inner sep=0pt, minimum size=4pt] at (3.5,0) (node35) {}; 
\draw[solid, line width=1pt] (node4)--(node35);
\node[circle, inner sep=0pt, minimum size=4pt] at (3,0) (node3) {}; 
\draw[dashed, line width=1pt] (node35)--(node3);
\node[circle, fill=black, inner sep=0pt, minimum size=4pt] at (4,1) (nodeu4) {};

\node[circle, fill=black, inner sep=0pt, minimum size=4pt] at (5,1) (nodeu5) {}; 
\draw[solid, line width=1pt] (node5)--(nodeu5);
\node[circle, fill=black, inner sep=0pt, minimum size=4pt] at (5,-1) (nodel5) {}; 
\draw[solid, line width=1pt] (node5)--(nodel5);

\draw[solid, line width=1pt] (node4)--(nodeu4);
\node[circle, fill=black, inner sep=0pt, minimum size=4pt] at (4,-1) (nodel4) {}; 
\draw[solid, line width=1pt] (node4)--(nodel4);

\node[circle, fill=black, inner sep=0pt, minimum size=1pt] at (2.65,0) (q1) {};
\node[circle, fill=black, inner sep=0pt, minimum size=1pt] at (2.5,0) (q2) {};
\node[circle, fill=black, inner sep=0pt, minimum size=1pt] at (2.35,0) (q3) {}; 

\draw [line width=1pt, decorate,
    decoration = {brace}] (6, -1.5)--(-6, -1.5);
\node[circle] at (0,-2) (label1) {$2n+1$};

\end{tikzpicture}
}
  \caption{Example of a tree $T$ with $ \left \lceil \frac{1}{2}\sum_{v \in V(T)} \left\lfloor \frac{\deg(v)}{4}\right\rfloor \right \rceil= n$. It consists of a path of length $2n+1$ with two leaves attached to each internal node of the path.}\label{figlowcatadv}
\end{figure}
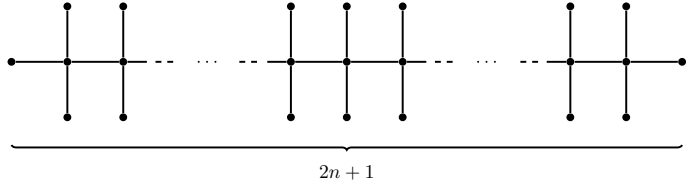

\begin{theorem}\label{lowcatadv}
Let $1/2 \leq \varepsilon < 1$. The advice complexity of any adaptive-priority algorithm for \textnormal{DPA} on trees with an approximation ratio smaller than $\frac{2}{1+\varepsilon}$ is at least $(1-\mathcal{H}(\varepsilon)) \left \lceil \frac{1}{2}\sum_{v \in V(T)} \left\lfloor \frac{\deg(v)}{4}\right\rfloor \right \rceil$.
\end{theorem}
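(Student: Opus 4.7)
The plan is to adapt the binary-string-guessing reduction from the proof of \cref{lowlwdpaadv}, using the edge-disjoint star subtrees furnished by \cref{lowcatlem} as independent copies of the choice-forcing gadget from \cref{lowcat}. First, I would apply \cref{lowcatlem} to fix $N := \lceil \frac{1}{2} \sum_{v \in V(T)} \lfloor \deg(v)/4 \rfloor \rceil$ pairwise edge-disjoint subtrees $S_4^{(1)}, \dots, S_4^{(N)} \subseteq T$, each with center $c_i$ and leaves $v_1^{(i)}, \dots, v_4^{(i)}$. Since the copies are edge-disjoint and $T$ is a tree, the length-$2$ path $[v_a^{(i)}, v_b^{(i)}]$ between two leaves of the $i$-th copy is exactly the path $v_a^{(i)} c_i v_b^{(i)}$ in $T$, and paths from different copies never share an edge. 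Hence the six length-$2$ paths $R_i := \{[v_a^{(i)}, v_b^{(i)}] : 1 \le a < b \le 4\}$ within copy $i$ form an independent instance of the \cref{lowcat} gadget.

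The reduction itself is structured exactly as in Algorithm \ref{guessalgo}. Given an adaptive-priority algorithm \textsc{ALG} for \textnormal{DPA} on trees reading fewer than $(1-\mathcal{H}(\varepsilon))N$ advice bits, I would build an online algorithm \textsc{GUESS} for \textnormal{2-SGKH} on inputs of length $N$ that runs \textsc{ALG} as a subroutine on the same advice tape. \textsc{GUESS} maintains $U := \bigcup_{i=1}^N R_i$ and $Q := \emptyset$. To process the $k$-th bit, it first feeds to \textsc{ALG} every element of $Q$ whose priority in \textsc{ALG}'s current adaptive order exceeds that of all elements of $U$, then sets $m_k := \max_\prec(U \cup Q) \in U$, determines the index $i_k$ with $m_k \in R_{i_k}$, feeds $m_k$ to \textsc{ALG}, and outputs $y_k := 1$ iff \textsc{ALG} accepts. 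After reading the true bit $d_k$, if $d_k = 0$ it enqueues the two crossings $[v_a^{(i_k)}, v_c^{(i_k)}]$ and $[v_b^{(i_k)}, v_d^{(i_k)}]$ (with $m_k = [v_a^{(i_k)}, v_b^{(i_k)}]$ and $\{c, d\} = \{1,2,3,4\} \setminus \{a, b\}$) into $Q$; if $d_k = 1$ it enqueues nothing. Finally, $R_{i_k}$ is removed from $U$, and after the $N$-th iteration the remaining contents of $Q$ are drained. Just as in \cref{lowlwdpaadv}, always feeding $\max_\prec(U \cup Q)$ makes the resulting sequence a valid instance for the adaptive-priority framework.

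For the analysis, let $w$ be the total number of wrong guesses and $n_0$ the number of indices $k$ with $d_k = 0$. The sub-instance contributed by copy $i_k$ satisfies $|\textsc{OPT}_k| = 1$ when $d_k = 1$ (only $m_k$ is present) and $|\textsc{OPT}_k| = 2$ when $d_k = 0$ (the two crossings are an optimal solution), so $|\textsc{OPT}| = N + n_0$. Moreover, a wrong guess costs \textsc{ALG} at least one unit of gain in its copy: if $d_k = 1$ and $y_k = 0$, then \textsc{ALG} rejects the only offered path, losing $1$; if $d_k = 0$ and $y_k = 1$, then \textsc{ALG} accepts $m_k$ but is blocked from both crossings, gaining $1$ instead of $2$. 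Hence $|\textsc{ALG}| \le |\textsc{OPT}| - w$.

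To conclude, I would assume for contradiction that \textsc{ALG} has approximation ratio strictly smaller than $2/(1+\varepsilon)$. On the constructed instance this would give $(N + n_0)/(N + n_0 - w) \le |\textsc{OPT}|/|\textsc{ALG}| < 2/(1+\varepsilon)$; rearranging and using $n_0 \le N$ yields $w < (1-\varepsilon)(N + n_0)/2 \le (1-\varepsilon) N$. Thus \textsc{GUESS} would correctly guess more than $\varepsilon N$ bits on every \textnormal{2-SGKH} input of length $N$, which by \cref{2sgkh} forces \textsc{GUESS}, and therefore \textsc{ALG}, to read at least $(1-\mathcal{H}(\varepsilon)) N$ advice bits, contradicting the hypothesis. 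I expect the main technical care to be the bookkeeping showing that \textsc{GUESS}'s interleaved feeding from $U$ and $Q$ is consistent with \textsc{ALG}'s adaptive priority at every step, which mirrors the well-definedness argument in \cref{lowlwdpaadv} and essentially reduces to noticing that the still-uncommitted part of the instance is always a subset of $U \cup Q$.
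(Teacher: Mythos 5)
Your proposal is correct and follows essentially the same route as the paper: a reduction from \textnormal{2-SGKH} using the edge-disjoint copies of $S_4$ from \cref{lowcatlem} as independent instances of the \cref{lowcat} gadget, with the same $U$/$Q$ bookkeeping as \textsc{GUESS}. The one deviation -- presenting nothing (rather than the complementary length-$2$ path) when $d_k=1$, so that $|\textsc{OPT}| = N + n_0$ instead of $2N$ -- still yields the claimed bound via your observation that $n_0 \le N$, whereas the paper's choice makes every star contribute $2$ to the optimum and gives the ratio $\frac{2n}{(1-\varepsilon)n + 2\varepsilon n}$ directly.
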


\begin{proof}
Set $\sigma (T):= \left \lceil \frac{1}{2}\sum_{v \in V(T)} \left\lfloor \frac{\deg(v)}{4}\right\rfloor \right \rceil$ and let \textsc{ALG} be an adaptive-priority algorithm for \textnormal{DPA} on trees reading fewer than $(1-\mathcal{H}(\varepsilon))\sigma (T) $ advice bits. For each $n \in \N$, choose a tree $T_n$ with $\sigma(T_n)=n$. This is possible, as evidenced by the existence of the tree depicted in \cref{figlowcatadv}.

As in the proof of \cref{lowlwdpaadv}, we construct an online algorithm \textsc{TGUESS} for the \textnormal{2-SGKH} problem -- using \textsc{ALG} as a subroutine. If \textsc{TGUESS} has to guess a binary string of length $n$, it uses \textsc{ALG} on the tree $T_n$. Let $S^1,\,S^2,\,...\,,\,S^n$ be $n$ pairwise edge-disjoint copies of $S_4$ in $T_n$ and define $R_i=\{r_{i,\,1},\,r_{i,\,2},\,...\,,\,r_{i,\,6}\}$ to be the set of all paths of length $2$ in $S^i$.

\textsc{TGUESS} is conceptually identical to the algorithm \textsc{GUESS} defined in the proof of \cref{lowlwdpaadv}. It presents the request $r_{i,j}$ with highest priority in $R_i$ to \textsc{ALG} and guesses $1$ if and only if \textsc{ALG} accepts $r_{i,j}$. It then learns what the correct bit would have been and accordingly either presents to \textsc{ALG} the request in $R_i$ that does not intersect $r_{i,j}$, i.e., the request that consists of the other two edges in $S^i$, or it presents two non-intersecting request in $R^i$ that both intersect $r_{i,j}$. In the first case, accepting $r_{i,j}$ (and the second request) would have been optimal for \textsc{ALG}, while in the second case rejecting $r_{i,j}$ and accepting the other two requests would have been optimal. See Algorithm \ref{tguessalgo}.

\begin{algorithm}[h]

\caption{\textsc{TGUESS}}\label{tguessalgo}
\medskip

\textbf{Online Input: } The sequence $n,\,d_1,\, d_2,\, ...,\, d_n$, where $n\in \N$ and $d_i\in \{0,1\}$ for $i\in \{1,\,2,\,...\,,\,n\}$.

\medskip

\textbf{Advice: } The advice tape $\Phi$.

\medskip

\textbf{Online Output: } The sequence $y_1,\, y_2,\, ...,\, y_n$, where  $y_i\in \{0,1\}$ for $i\in \{1,\,2,\,...\,,\,n\}$.

\medskip

\settowidth{\characterlength}{(}

\textbf{Algorithm: }

\medskip

	\begin{algorithmic}
		\State Read the length $n\in \N$ of the binary string to guess.
		\State \hspace{-\characterlength}(Input the tree $T_n$ to \textsc{ALG}.) \Comment{Only informally.}
		\State $Q\gets \emptyset$
		\State $U\gets \{r_{i,j}\;|\;1\leq i\leq n,\;1\leq j\leq 6\}$
		\For {$1\leq k\leq n$}
			\While {$\max{(U\cup Q)} \in Q$}
				\State $m \gets \max{(U\cup Q)}$
				\State Feed $m$ to \textsc{ALG}
				\State $Q \gets Q\setminus \{m\}$
			\EndWhile
			\State $m_k\gets \max{(U\cup Q)}$ \Comment{$m_k\in U$.}
			\State Find $1 \leq i\leq n$ such that $m_k\in R_i$
       	\State $U \gets U \setminus R_i$
       	\State Feed $m_k$ to \textsc{ALG}, if \textsc{ALG} accepts, guess $y_k:=1$, else guess $y_k:=0$.
       	\State Read the true value $d_k$.
       	\If {$d_k=1$}
				\State $Q\gets Q\cup \{\text{the request in } R_i\text{ that does not intersect } m_k.\}$
			\EndIf
			\If {$d_k=0$}
				\State $Q\gets Q\cup (\text{two edge-disjoint requests in }R_i \text{ that intersect }m_k)$
			\EndIf
		\EndFor
		\While {$Q \neq \emptyset$} \Comment{Post-processing}
			\State $m \gets \max Q$
			\State Feed $m$ to \textsc{ALG}
			\State $Q \gets Q\setminus \{m\}$
		\EndWhile
	\end{algorithmic}
\end{algorithm}

Exactly as in \cref{lowlwdpaadv}, as a consequence of \cref{2sgkh} proved by Böckenhauer et al.~\cite{bockenhauer2014string}, \textsc{TGUESS} must guess at least $(1-\varepsilon) n$ bits incorrectly on some instance of \textnormal{2-SGKH} of size $n$, because it reads fewer than $(1-\mathcal{H}(\varepsilon))\sigma (T)= (1-\mathcal{H}(\varepsilon))n$ advice bits. Thus, for the corresponding instance $I$ of \textnormal{DPA} on paths, there are at least $(1-\varepsilon)n$ out of $n$ star trees $S^i$ on which \textsc{ALG} only accepts one request, while the optimal solution accepts two. This yields
\begin{equation*} 
\frac{|\textsc{OPT}(I)|}{| \textsc{ALG}(I)|}\geq\frac{2n}{(1-\varepsilon ) n + 2 \varepsilon n}=\frac{2}{1+\varepsilon}.\qedhere
\end{equation*}
\end{proof}

\section{Concluding Remarks}\label{sec:conclusion}
What has thus far remained untouched in this paper are priority algorithms for \textnormal{DPA} problems on graph classes containing graphs with cycles. In particular, the \textnormal{DPA} problem on grids -- another generalization of \textnormal{DPA} on paths -- has already been analyzed in the online framework by Böckenhauer et al.~\cite{bockenhauer2018call}. While their work shows that it is already quite challenging to obtain results for online algorithms, it is even more difficult to do so for priority algorithms. The complication lies in the fact that requests no longer correspond to unique paths, i.e., that an online or priority algorithm has multiple options for how to satisfy a request.

It is quite easy to show a lower bound of $3/2$ on the approximation ratio of any adaptive-priority algorithm without advice for \textnormal{DPA} on grids -- but only on a specific graph.

\begin{theorem}\label{thm:gridthm}
An adaptive-priority algorithm without advice for \textnormal{DPA} on grids cannot be better than strictly $3/2$-competitive on all grids.
\end{theorem}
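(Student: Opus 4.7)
The plan is to adapt the adversarial strategy used for the tree lower bound in \cref{lowcat}, but the key extra difficulty in grids is that the algorithm now has \emph{routing flexibility}: whenever it accepts a request $[x,y]$ it may pick among many $x$--$y$-paths, so the simple ``$p$ blocks the edges $v_iv_0,v_0v_j$'' argument from the tree proof no longer applies verbatim. The whole plan therefore hinges on pinning the algorithm down to a routing choice and then punishing it.

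Concretely, I would fix a specific small grid where routing flexibility is minimal, e.g.\ a $3\times 3$ grid, and focus on the two ``anti-diagonal'' requests $r_1=[(0,0),(2,2)]$ and $r_2=[(0,2),(2,0)]$. A direct check of the six shortest routings of~$r_1$ (the six R/D-sequences of length~4) shows that, in the complement of each such routing, the pair $(0,2),(2,0)$ is in the same connected component for exactly two routings (call them $\rho_{\mathrm{A}},\rho_{\mathrm{B}}$) and in different components for the remaining four; for each $\rho_{\mathrm{A}},\rho_{\mathrm{B}}$ there is then exactly one edge-disjoint routing of~$r_2$ left. Longer routings of~$r_1$ use even more edges and can be folded into the ``disconnecting'' case. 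This Menger-style structural claim is the main technical lemma I would prove first.

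Given any adaptive-priority algorithm \textsc{ALG}, I would do a case analysis on its initial priority. WLOG $r_1\succ r_2$. If \textsc{ALG} rejects $r_1$ when it is presented as the first request on the singleton instance $\{r_1\}$, the approximation ratio is unbounded. If \textsc{ALG} accepts $r_1$ via one of the four ``disconnecting'' routings, then on $I=\{r_1,r_2\}$ we have $\abs{\textsc{ALG}(I)}=1$ but $\abs{\textsc{OPT}(I)}=2$ (using $\rho_{\mathrm{A}}$ or $\rho_{\mathrm{B}}$ for $r_1$), giving a ratio of $2\geq 3/2$. If \textsc{ALG}'s routing of~$r_1$ is $\rho_{\mathrm{A}}$ (symmetrically $\rho_{\mathrm{B}}$), I augment the instance with a single length-one request $r_\star$ chosen to lie on the unique $\rho_{\mathrm{A}}$-compatible routing of~$r_2$. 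Then \textsc{ALG} happily accepts $r_\star$ and $r_1$ via $\rho_{\mathrm{A}}$, but $r_2$ is now left with no feasible routing; \textsc{OPT}, however, can take $r_1$ via $\rho_{\mathrm{B}}$, take $r_2$ via its other unique routing, and still take $r_\star$, yielding $\abs{\textsc{OPT}(I)}/\abs{\textsc{ALG}(I)}=3/2$.

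The main obstacle I expect is the sub-case where \textsc{ALG}'s routing of~$r_1$ is \emph{context-dependent}: it may switch between $\rho_{\mathrm{A}}$ and $\rho_{\mathrm{B}}$ depending on which edge request has already been accepted. I would handle this by preparing two candidate augmenting edge requests $r_\star^{(\mathrm{A})},r_\star^{(\mathrm{B})}$, one blocking each of the two safe routings, and arguing that no deterministic routing rule can simultaneously escape both augmented instances: whichever of $\rho_{\mathrm{A}},\rho_{\mathrm{B}}$ \textsc{ALG} chooses in a given state is then exploited by the matching augmentation, while the other routing is used by \textsc{OPT}. The priority-order aspect causes no additional difficulty, since the argument goes through whether $r_\star$ is processed before or after $r_1$: in both orderings \textsc{ALG} commits to the same pair of routings, and $r_2$ is blocked for the same structural reason.
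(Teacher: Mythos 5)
Your structural analysis of the $3\times 3$ grid is correct: among the six shortest routings of $r_1=[(0,0),(2,2)]$, exactly two (the one through the middle column and the one through the middle row, say $\rho_{\mathrm A}$ and $\rho_{\mathrm B}$) leave $(0,2)$ and $(2,0)$ connected in the complement, each admitting a unique edge-disjoint routing $P_{\mathrm A}$ resp.\ $P_{\mathrm B}$ of $r_2$, and every routing of $r_1$ of length at least $6$ isolates one of the two endpoints of $r_2$. Your first two cases (rejecting $r_1$, or accepting it via a disconnecting routing) are also sound, because the algorithm's behavior on the first presented request cannot depend on the rest of the instance.

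The gap is in your third case, and it is precisely the priority-order issue you dismiss in your last sentence. Nothing forces $r_1$ to be presented before the length-one request $r_\star$: the \emph{algorithm} chooses the priority order, and a single-edge request may well outrank both diagonals. If $r_\star^{(\mathrm A)}=[(0,1),(0,2)]$ arrives first, the algorithm accepts it (via the direct edge, say) and may then route $r_1$ \emph{differently} than it would as an opening request -- an adaptive-priority (or even fixed-priority) algorithm may condition every decision on the history seen so far, so ``in both orderings \textsc{ALG} commits to the same pair of routings'' is false. Concretely, the rule ``after accepting the edge $\{(0,1),(0,2)\}$, use the pair $(\rho_{\mathrm B},P_{\mathrm B})$; after accepting $\{(0,2),(1,2)\}$, use $(\rho_{\mathrm A},P_{\mathrm A})$'' accepts all three requests of either augmented instance, since $P_{\mathrm B}$ avoids $\{(0,1),(0,2)\}$ and $P_{\mathrm A}$ avoids $\{(0,2),(1,2)\}$; both of your instances then have ratio $1$, refuting the claim that no deterministic routing rule escapes both augmentations. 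Including both blockers at once does not help, as \textsc{OPT} itself then drops to $3$ while the algorithm can still reach $3$. The paper's proof avoids this trap by a different device: it defines $r$ as the highest-priority request \emph{among all requests whose endpoints are at distance $3$} and uses only distance-$3$ requests as punishers, so $r$ is guaranteed to arrive first whatever order the algorithm picks, and the case distinction is on whether the allocated path passes through the center vertex. To salvage your construction you would need an analogous guarantee that the blocking request arrives after $r_1$, which the two diagonals alone cannot provide -- they are the only distance-$4$ requests in the $3\times3$ grid.
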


The proof, which can be found in \cref{apdx:gridthm}, uses a construction on the $(3 \times 3)$-grid. We conjecture that this lower bound holds for all grids -- except, of course, for the $1 \times n$ grid, $n \in \N$, since this is simply the path of length $n-1$, on which \textnormal{DPA} is optimally solvable by \cref{thm:optdpa}. While this lower bound might be difficult to prove in the general case, proving it for specific grid sizes should be possible in a similar manner to how we prove it for the $(3 \times 3)$-grid. It is simply a matter of distinguishing between finitely many cases, which can be done by a computer.

\bibliographystyle{plainurl}
\bibliography{mybib}

\appendix
\newpage
\section{Proof of Theorem~\ref{thm:gridthm}}\label{apdx:gridthm}
\begin{proof}
Let \textsc{ALG} be an adaptive-priority algorithm without advice for \textnormal{DPA} on grids and let $G$ be a $(3 \times 3)$-grid as depicted below. Let $r$ be the request with highest initial priority among all requests in $G$ whose end points have distance $3$, where the distance between two vertices is the length of the shortest path connecting them. After suitably rotating $G$ and labeling its vertices, $r$ will be the request $[v,\,w]$ in \cref{figgrid}. As discussed previously in this paper, on every instance in which $r$ has highest initial priority, \textsc{ALG} will have to accept $r$ in order to achieve any approximation ratio at all. Depending on how \textsc{ALG} satisfies $r$, we consider different instances on which \textsc{ALG} will be bad compared to an optimal offline algorithm.

Say \textsc{ALG} satisfies $r$ by allocating the path $p$. If $p$ does not contain the center vertex of the grid, it must internally contain some corner vertex $c$. This allows us to define the instance $I_1$ consisting of $r$ and two requests containing the corner $c$ that both cannot be accepted by \textsc{ALG} but by the optimal solution, as depicted in \cref{figgrid2}, yielding $\abs{\textsc{OPT}(I_1)}/\abs{\textsc{ALG}(I_1)}=2$.

If $p$ does contain the center of the grid, the situation must be as in \cref{figgrid3} (after suitably rotating and reflecting $G$). This allows us to define $I_2$ as the instance consisting of $r$ and three additional requests that are all contained in the optimal solution but of which at most one can be accepted by \textsc{ALG}. This yields $\abs{\textsc{OPT}(I_2)}/\abs{\textsc{ALG}(I_2)} \geq 3/2$.
\end{proof}
\newpage
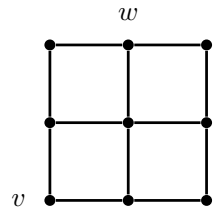
\begin{figure}
\centering
\resizebox{.22\textwidth}{!}{

\begin{tikzpicture}
\node[circle, fill=black, inner sep=0 , minimum size=4] at (0,0) (node00) {};
\node[circle,fill=black, inner sep=0 , minimum size=4] at (0,1) (node01) {};
\node[circle,fill=black, inner sep=0 , minimum size=4] at (0,2) (node02) {};

\foreach \n in {1,...,2}{
		\foreach \m in {0,...,2}{
		\pgfmathtruncatemacro{\nminusone}{\n - 1}
       \node[circle,fill=black, inner sep=0 , minimum size=4] at (\n,\m) (node\n\m) {};
       \draw[line width=1] (node\nminusone\m)--(node\n\m);
		}
    }
\foreach \n in {0,...,2}{
	\foreach \m in {1,...,2}{
	\pgfmathtruncatemacro{\mminusone}{\m - 1}
   	\draw[line width=1] (node\n\mminusone)--(node\n\m);
		}
    }
\node[circle] at (-0.4,-0){$v$};
\node[circle] at (1, 2.4) {$w$};

\end{tikzpicture}

}

\caption{The $(3 \times 3)$-grid $G$ and the request $r=[v,\,w]:=\max_{\prec}\{[a,\,b]\;|\; d(a,\,b)=3\}$, where $\prec$ is the initial priority order. Note that $r$ must be accepted by \textsc{ALG} when operating on any instance beginning with $r$, because otherwise, \textsc{ALG} has a gain of $0$ on the instance consisting only of $r$, while the optimal solution has a gain of $1$.}
\label{figgrid}
\end{figure}

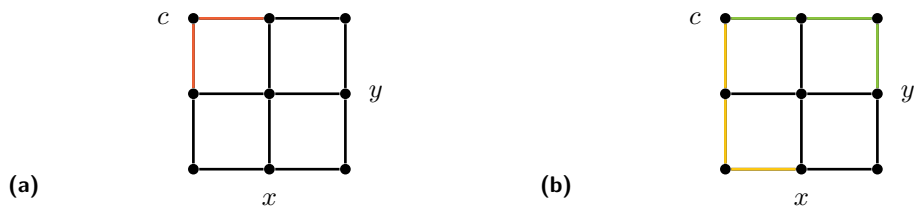
\begin{figure}
\centering
\begin{subfigure}{.5\textwidth}
\centering
\begin{tikzpicture}

\node[circle,fill=black, inner sep=0, minimum size=4] at (0,0) (node00) {};
\node[circle,fill=black, inner sep=0, minimum size=4] at (0,1) (node01) {};
\node[circle,fill=black, inner sep=0, minimum size=4] at (0,2) (node02) {};

\foreach \n in {1,...,2}{
		\foreach \m in {0,...,2}{
		\pgfmathtruncatemacro{\nminusone}{\n - 1}
       \node[circle,fill=black, inner sep=0, minimum size=4] at (\n,\m) (node\n\m) {};
       \draw[line width=1] (node\nminusone\m)--(node\n\m);
		}
    }
\foreach \n in {0,...,2}{
	\foreach \m in {1,...,2}{
	\pgfmathtruncatemacro{\mminusone}{\m - 1}
   	\draw[line width=1] (node\n\mminusone)--(node\n\m);
		}
    }
\node[circle] at (-0.4,2){$c$};
\node[circle] at (1, -0.4) {$x$};
\node[circle] at (2.4, 1) {$y$};

\draw[line width=1, color=RedOrange] (node02)--(node12);
\draw[line width=1, color=RedOrange] (node02)--(node01);

\end{tikzpicture}
\vspace*{-5.5ex}
\caption{}
\label{figgrid2a}
\end{subfigure}%
\begin{subfigure}{.5\textwidth}
\centering
\begin{tikzpicture}

\node[circle,fill=black, inner sep=0, minimum size=4] at (0,0) (node00) {};
\node[circle,fill=black, inner sep=0, minimum size=4] at (0,1) (node01) {};
\node[circle,fill=black, inner sep=0, minimum size=4] at (0,2) (node02) {};

\foreach \n in {1,...,2}{
		\foreach \m in {0,...,2}{
		\pgfmathtruncatemacro{\nminusone}{\n - 1}
       \node[circle,fill=black, inner sep=0, minimum size=4] at (\n,\m) (node\n\m) {};
       \draw[line width=1] (node\nminusone\m)--(node\n\m);
		}
    }
\foreach \n in {0,...,2}{
	\foreach \m in {1,...,2}{
	\pgfmathtruncatemacro{\mminusone}{\m - 1}
   	\draw[line width=1] (node\n\mminusone)--(node\n\m);
		}
    }

\node[circle] at (-0.4,2){$c$};
\node[circle] at (1, -0.4) {$x$};
\node[circle] at (2.4, 1) {$y$};

\draw[line width=1, color=lipicsYellow] (node02)--(node01);
\draw[line width=1, color=lipicsYellow] (node01)--(node00);
\draw[line width=1, color=lipicsYellow] (node00)--(node10);

\draw[line width=1, color=LimeGreen] (node02)--(node12);
\draw[line width=1, color=LimeGreen] (node12)--(node22);
\draw[line width=1, color=LimeGreen] (node22)--(node21);

\end{tikzpicture}
\vspace*{-5.5ex}
\caption{}
\label{figgrid2b}
\end{subfigure}
\caption{If $p$ does not contain the center of $G$, it must traverse some corner vertex $c$ to get from $v$ to $w$, i.e., it must contain the two red edges incident to $c$ in Subfigure (a) (after suitably rotating the grid). This means that both the requests $[c,\,x]$ and $[c,\,y]$ -- with $x$ and $y$ as in the figures -- cannot be accepted by \textsc{ALG} because both edges incident to $c$ are blocked (note that $r$ has higher priority than $[c,\,x]$ and $[c,\,y]$ by definition). The optimal solution, however, rejects $r$ and accepts the other two requests, e.g., by allocating the yellow and the green path in Subfigure (b). This yields $\abs{\textsc{OPT}(I_1)}/\abs{\textsc{ALG}(I_1)} = 2$.}
\label{figgrid2}
\end{figure}
\begin{figure}[h]
\centering
\begin{subfigure}{.5\textwidth}
\centering
\begin{tikzpicture}

\node[circle,fill=black, inner sep=0, minimum size=4] at (0,0) (node00) {};
\node[circle,fill=black, inner sep=0, minimum size=4] at (0,1) (node01) {};
\node[circle,fill=black, inner sep=0, minimum size=4] at (0,2) (node02) {};

\foreach \n in {1,...,2}{
		\foreach \m in {0,...,2}{
		\pgfmathtruncatemacro{\nminusone}{\n - 1}
       \node[circle,fill=black, inner sep=0, minimum size=4] at (\n,\m) (node\n\m) {};
		}
    }
\foreach \n in {0,...,2}{
	\foreach \m in {1,...,2}{
	\pgfmathtruncatemacro{\mminusone}{\m - 1}
   	\draw[line width=1] (node\n\mminusone)--(node\n\m);
		}
    }

\draw[line width=1] (node00)--(node10);
\draw[line width=1] (node10)--(node20);
\draw[line width=1] (node01)--(node11);
\draw[line width=1] (node11)--(node21);
\draw[dashed, line width=1] (node02)--(node12);
\draw[line width=1] (node12)--(node22);

\node[circle] at (-0.4,1){$t$};
\node[circle] at (-0.4, 2) {$t'$};
\node[circle] at (1, -0.4) {$x$};
\node[circle] at (2.4, 2) {$z$};
\node[circle] at (2.4, -0.4) {$y$};

\draw[line width=1, color=RedOrange] (node00)--(node01);
\draw[line width=1, color=RedOrange] (node01)--(node11);

\end{tikzpicture}
\vspace*{-5.5ex}
\caption{}
\label{figgrid3a}
\end{subfigure}%
\begin{subfigure}{.5\textwidth}
\centering
\begin{tikzpicture}

\node[circle,fill=black, inner sep=0, minimum size=4] at (0,0) (node00) {};
\node[circle,fill=black, inner sep=0, minimum size=4] at (0,1) (node01) {};
\node[circle,fill=black, inner sep=0, minimum size=4] at (0,2) (node02) {};

\foreach \n in {1,...,2}{
		\foreach \m in {0,...,2}{
		\pgfmathtruncatemacro{\nminusone}{\n - 1}
       \node[circle,fill=black, inner sep=0, minimum size=4] at (\n,\m) (node\n\m) {};
       \draw[line width=1] (node\nminusone\m)--(node\n\m);
		}
    }
\foreach \n in {0,...,2}{
	\foreach \m in {1,...,2}{
	\pgfmathtruncatemacro{\mminusone}{\m - 1}
   	\draw[line width=1] (node\n\mminusone)--(node\n\m);
		}
    }

\node[circle] at (-0.4,1){$t$};
\node[circle] at (-0.4, 2) {$t'$};
\node[circle] at (1, -0.4) {$x$};
\node[circle] at (2.4, 2) {$z$};
\node[circle] at (2.4, -0.4) {$y$};

\draw[line width=1, color=LimeGreen] (node01)--(node11);
\draw[line width=1, color=LimeGreen] (node11)--(node21);
\draw[line width=1, color=LimeGreen] (node21)--(node22);

\draw[line width=1, color=lipicsYellow] (node01)--(node00);
\draw[line width=1, color=lipicsYellow] (node00)--(node10);
\draw[line width=1, color=lipicsYellow] (node10)--(node20);

\draw[line width=1, color=NavyBlue] (node02)--(node12);
\draw[line width=1, color=NavyBlue] (node12)--(node11);
\draw[line width=1, color=NavyBlue] (node11)--(node10);
\end{tikzpicture}
\vspace*{-5.5ex}
\caption{}
\label{figgrid3b}
\end{subfigure}
\caption{If $p$ contains the center of the grid, it needs to traverse at least $2$ edges before getting there from $v$, and must thus contain the two red edges incident to the vertex $t$ in Subfigure (a) (after suitably rotating and reflecting the grid). The additional requests $[t,\,y]$, $[t,\,z]$ and $[t',\,x]$ -- with $x$, $y$, and $z$ as in the figures -- all have lower priority than $r$ by definition. \textsc{ALG} can accept at most one of these additional requests since an allocated path for any of them must use the dashed edge. The optimal solution rejects $r$ and accepts the other three requests, e.g., by allocating the yellow, the green and the blue path as in Subfigure (b). This yields $\abs{\textsc{OPT}(I_2)}/\abs{\textsc{ALG}(I_2)} \geq 3/2$.}
\label{figgrid3}
\vspace*{-9ex}
\end{figure}
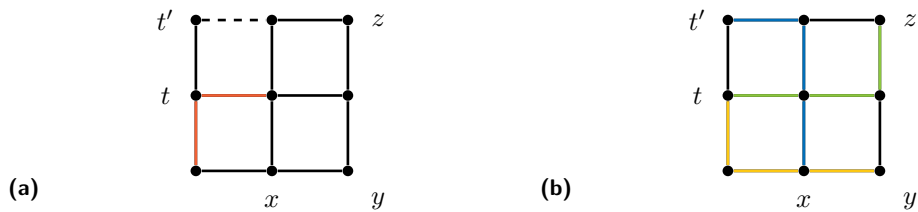

\end{document}